\documentclass[10pt,twocolumn,twoside]{IEEEtran}
%\documentclass[journal]{IEEEtran}
%
% If IEEEtran.cls has not been installed into the LaTeX system files,
% manually specify the path to it like:
% \documentclass[journal]{../sty/IEEEtran}
\usepackage{color}
\usepackage{algorithm}
\usepackage{algpseudocode}
\usepackage{undertilde}

% Some very useful LaTeX packages include:
% (uncomment the ones you want to load)

% *** MISC UTILITY PACKAGES ***
%
\usepackage{ifpdf}
% Heiko Oberdiek's ifpdf.sty is very useful if you need conditional
% compilation based on whether the output is pdf or dvi.
% usage:
% \ifpdf
%   % pdf code
% \else
%   % dvi code
% \fi
% The latest version of ifpdf.sty can be obtained from:
% http://www.ctan.org/tex-archive/macros/latex/contrib/oberdiek/
% Also, note that IEEEtran.cls V1.7 and later provides a builtin
% \ifCLASSINFOpdf conditional that works the same way.
% When switching from latex to pdflatex and vice-versa, the compiler may
% have to be run twice to clear warning/error messages.

% *** CITATION PACKAGES ***
%
\usepackage{cite}
% cite.sty was written by Donald Arseneau
% V1.6 and later of IEEEtran pre-defines the format of the cite.sty package
% \cite{} output to follow that of IEEE. Loading the cite package will
% result in citation numbers being automatically sorted and properly
% "compressed/ranged". e.g., [1], [9], [2], [7], [5], [6] without using
% cite.sty will become [1], [2], [5]--[7], [9] using cite.sty. cite.sty's
% \cite will automatically add leading space, if needed. Use cite.sty's
% noadjust option (cite.sty V3.8 and later) if you want to turn this off.
% cite.sty is already installed on most LaTeX systems. Be sure and use
% version 4.0 (2003-05-27) and later if using hyperref.sty. cite.sty does
% not currently provide for hyperlinked citations.
% The latest version can be obtained at:
% http://www.ctan.org/tex-archive/macros/latex/contrib/cite/
% The documentation is contained in the cite.sty file itself.

% *** GRAPHICS RELATED PACKAGES ***
%
\ifCLASSINFOpdf
   \usepackage[pdftex]{graphicx}

  % declare the path(s) where your graphic files are
  % \graphicspath{{../pdf/}{../jpeg/}}
  % and their extensions so you won't have to specify these with
  % every instance of \includegraphics
  % \DeclareGraphicsExtensions{.pdf,.jpeg,.png}
\else
  % or other class option (dvipsone, dvipdf, if not using dvips). graphicx
  % will default to the driver specified in the system graphics.cfg if no
  % driver is specified.
  \usepackage[dvips]{graphicx}
  % declare the path(s) where your graphic files are
  % \graphicspath{{../eps/}}
  % and their extensions so you won't have to specify these with
  % every instance of \includegraphics
  % \DeclareGraphicsExtensions{.eps}
\fi
% graphicx was written by David Carlisle and Sebastian Rahtz. It is
% required if you want graphics, photos, etc. graphicx.sty is already
% installed on most LaTeX systems. The latest version and documentation can
% be obtained at:
% http://www.ctan.org/tex-archive/macros/latex/required/graphics/
% Another good source of documentation is "Using Imported Graphics in
% LaTeX2e" by Keith Reckdahl which can be found as epslatex.ps or
% epslatex.pdf at: http://www.ctan.org/tex-archive/info/
%
% latex, and pdflatex in dvi mode, support graphics in encapsulated
% postscript (.eps) format. pdflatex in pdf mode supports graphics
% in .pdf, .jpeg, .png and .mps (metapost) formats. Users should ensure
% that all non-photo figures use a vector format (.eps, .pdf, .mps) and
% not a bitmapped formats (.jpeg, .png). IEEE frowns on bitmapped formats
% which can result in "jaggedy"/blurry rendering of lines and letters as
% well as large increases in file sizes.
%
% You can find documentation about the pdfTeX application at:
% http://www.tug.org/applications/pdftex

% *** MATH PACKAGES ***
%
\usepackage[cmex10]{amsmath}
\usepackage{array}
% Frank Mittelbach's and David Carlisle's array.sty patches and improves
% the standard LaTeX2e array and tabular environments to provide better
% appearance and additional user controls. As the default LaTeX2e table
% generation code is lacking to the point of almost being broken with
% respect to the quality of the end results, all users are strongly
% advised to use an enhanced (at the very least that provided by array.sty)
% set of table tools. array.sty is already installed on most systems. The
% latest version and documentation can be obtained at:
% http://www.ctan.org/tex-archive/macros/latex/required/tools/
%\usepackage{color}
%\usepackage{multirow}
%\usepackage{amsmath,amsfonts}
%\usepackage[colorlinks]{hyperref}
%\usepackage[boxed,ruled,lined,linesnumbered]{algorithm2e}
%
%\usepackage{mdwmath}
%\usepackage{mdwtab}
% Also highly recommended is Mark Wooding's extremely powerful MDW tools,
% especially mdwmath.sty and mdwtab.sty which are used to format equations
% and tables, respectively. The MDWtools set is already installed on most
% LaTeX systems. The lastest version and documentation is available at:
% http://www.ctan.org/tex-archive/macros/latex/contrib/mdwtools/

% IEEEtran contains the IEEEeqnarray family of commands that can be used to
% generate multiline equations as well as matrices, tables, etc., of high
% quality.

\usepackage{eqparbox}

\usepackage{stfloats}
\usepackage{url}
% url.sty was written by Donald Arseneau. It provides better support for
% handling and breaking URLs. url.sty is already installed on most LaTeX
% systems. The latest version can be obtained at:
% http://www.ctan.org/tex-archive/macros/latex/contrib/misc/
% Read the url.sty source comments for usage information. Basically,
% \url{my_url_here}.

% *** Do not adjust lengths that control margins, column widths, etc. ***
% *** Do not use packages that alter fonts (such as pslatex).         ***
% There should be no need to do such things with IEEEtran.cls V1.6 and later.
% (Unless specifically asked to do so by the journal or conference you plan
% to submit to, of course. )

% correct bad hyphenation here
\hyphenation{op-tical net-works semi-conduc-tor}

\usepackage{color}
\usepackage{multirow}
\usepackage{amsmath,amsfonts}
\usepackage[colorlinks]{hyperref}
\usepackage{subfigure}
\usepackage{caption2}
\usepackage{graphicx}
\usepackage{tikz}

\newcommand{\pgftextcircled}[1]{
    \setbox0=\hbox{#1}%
    \dimen0\wd0%
    \divide\dimen0 by 2%
    \begin{tikzpicture}[baseline=(a.base)]%
        \useasboundingbox (-\the\dimen0,0pt) rectangle (\the\dimen0,1pt);
        \node[circle,draw,outer sep=0pt,inner sep=0.1ex] (a) {#1};
    \end{tikzpicture}
}

\newcommand{\pgftextcircledblk}[1]{
    \setbox0=\hbox{#1}%
    \dimen0\wd0%
    \divide\dimen0 by 2%
    \begin{tikzpicture}[baseline=(a.base)]%
        \useasboundingbox (-\the\dimen0,0pt) rectangle (\the\dimen0,1pt);
        \node[circle,draw,outer sep=0pt,inner sep=0.1ex,fill=blue] (a) {#1};
    \end{tikzpicture}
}

\newcommand{\red}[1]{
       \textcolor{red}{#1}
}

\def\supp{{\rm supp}}

\def\t0{{t_0}}

        % rationals
        % integers
\def\R{{\mathbb R}}        % reals
          % product of n copies of reals

\def\t0{{t_0}}

\def\vs{{\mathbf s}}

\def\vE{{\mathbf E}}
\def\vL{{\mathbf L}}
\def\L{{\mathcal L}}
\def\vb{{\mathbf b}}
\def\vD{{\mathbf D}}        % rationals
\def\vA{{\mathbf A}}        % integers
\def\R{{\mathbb R}}        % reals
          % product of n copies of reals

\DeclareMathOperator*{\Min}{minimize}
  \newtheorem{theorem}{Theorem}
    \newtheorem{lemma}{Lemma}
      
  \newtheorem{remark}{Remark}

\begin{document}
%
% paper title
% can use linebreaks \\ within to get better formatting as desired
%\title{Automatic Outlier Detection with HodgeRank in Crowdsourceable QoE Evaluation}
%\title{Fast Adaptive Least Trimmed Squares for Outlier Detection in Crowdsourceable QoE Evaluation}
\title{Fast Adaptive Algorithm for Robust Evaluation of Quality of Experience}
%
%
% author names and IEEE memberships
% note positions of commas and nonbreaking spaces ( ~ ) LaTeX will not break
% a structure at a ~ so this keeps an author's name from being broken across
% two lines.
% use \thanks{} to gain access to the first footnote area
% a separate \thanks must be used for each paragraph as LaTeX2e's \thanks
% was not built to handle multiple paragraphs
%

\author{Qianqian~Xu,
        ~Ming~Yan,
        and~Yuan~Yao

\thanks{Q. Xu is with BICMR, Peking University, Beijing 100871, China, (email: xuqianqian@math.pku.edu.cn).}% <-this % stops a space
\thanks{M. Yan is with Department of Mathematics, University of California, Los Angeles, CA 90095, USA, (email: yanm@math.ucla.edu).}
\thanks{Y. Yao is with LMAM-LMP-LMEQF, School of Mathematical Sciences, Peking University, Beijing 100871, China, (email: yuany@math.pku.edu.cn).}% <-this % stops a space

\thanks{This work is supported in part by National Basic Research Program of China (973 Program 2012CB825501), NSFC Grant 61071157 and 61402019, NSF Grants DMS-1349855, DMS-1317602, and ARO MURI Grant W911NF-09-1-0383.
}
}

% note the % following the last \IEEEmembership and also \thanks -
% these prevent an unwanted space from occurring between the last author name
% and the end of the author line. i.e., if you had this:
%
% \author{....lastname \thanks{...} \thanks{...} }
%                     ^------------^------------^----Do not want these spaces!
%
% a space would be appended to the last name and could cause every name on that
% line to be shifted left slightly. This is one of those "LaTeX things". For
% instance, "\textbf{A} \textbf{B}" will typeset as "A B" not "AB". To get
% "AB" then you have to do: "\textbf{A}\textbf{B}"
% \thanks is no different in this regard, so shield the last } of each \thanks
% that ends a line with a % and do not let a space in before the next \thanks.
% Spaces after \IEEEmembership other than the last one are OK (and needed) as
% you are supposed to have spaces between the names. For what it is worth,
% this is a minor point as most people would not even notice if the said evil
% space somehow managed to creep in.

% The paper headers
\markboth{}%
{XU \MakeLowercase{\textit{et al.}}: iLTS for Outlier Detection}
%{  }
% The only time the second header will appear is for the odd numbered pages
% after the title page when using the twoside option.
%
% *** Note that you probably will NOT want to include the author's ***
% *** name in the headers of peer review papers.                   ***
% You can use \ifCLASSOPTIONpeerreview for conditional compilation here if
% you desire.

% If you want to put a publisher's ID mark on the page you can do it like
% this:
%\IEEEpubid{0000--0000/00\$00.00~\copyright~2007 IEEE}
% Remember, if you use this you must call \IEEEpubidadjcol in the second
% column for its text to clear the IEEEpubid mark.

% use for special paper notices
%\IEEEspecialpapernotice{(Invited Paper)}

% make the title area
\maketitle

\begin{abstract}
Outlier detection is an integral part of robust evaluation for crowdsourceable Quality of Experience (QoE) and has attracted much attention in recent years. In QoE for multimedia, outliers happen because of different test conditions, human errors, abnormal variations in context, {etc}. In this paper, we propose a simple yet effective algorithm for outlier detection and robust QoE evaluation named iterative Least Trimmed Squares (iLTS). The algorithm assigns binary weights to samples, i.e., 0 or 1 indicating if a sample is an outlier, then the outlier-trimmed subset least squares solutions give robust ranking scores. An iterative optimization is carried alternatively between updating weights and ranking scores which converges to a local optimizer in finite steps. In our test setting, iLTS is up to 190 times faster than LASSO-based methods with a comparable performance. Moreover, a varied version of this method shows adaptation in outlier detection, which provides an automatic detection to determine whether a data sample is an outlier without \emph{a priori} knowledge about the amount of the outliers. The effectiveness and efficiency of iLTS are demonstrated on both simulated examples and real-world applications. A Matlab package is provided to researchers exploiting crowdsourcing paired comparison data for robust ranking.

%-
%parallelize algorithm for outlier detection based on
%Linearized Bregman Iteration (LBI).
%
%
%
%automatic method, which provides an automatic cutoff to determine whether a data sample is an outlier or not,
% based on HodgeRank on random graphs and adaptive outlier pursuit---in contrast, previous methods need prior knowledge
% about the number of the outliers for parameter tuning. The weight for the least square problem is binary (0 or 1) and updated based on outlier detection.
%  This method is iteratively solving the weighted least square problem and detecting the outliers. Accurate outlier detection improves the QoE evaluation,
%  and better QoE evaluation helps detecting the outliers accurately. Iteratively reweighed least square (IRLS) is a fast and robust method: It is up to 190 times
%  faster than previous methods; Under some reasonable assumptions, this proposed method without any prior information performs better than previous methods with
%   given number of outliers, which is confirmed by experimental studies on both simulated and real-world data.

%\textcolor{red}{(Make codes publicly available. Maybe we could call this algorithm reweighted least square?)}
\end{abstract}

% A category with the (minimum) three required fields
\begin{IEEEkeywords}
Quality of Experience (QoE); Crowdsourcing; Paired Comparison; Outlier Detection; Iterative Least Trimmed Squares; HodgeRank; Adaptive Outlier Pursuit
\end{IEEEkeywords}

%%%%%%%%% BODY TEXT
\section{INTRODUCTION}

In recent years, the quality of experience (QoE) notion~\cite{Hossfeld12-QoE,Wu13crowd} has become a major research theme within the multimedia community, which can be described as the assessment of a user's subjective expectation, feeling, perception, and satisfaction with respect to multimedia content. There are two main quality assessment methodologies, namely subjective and objective assessment. Measuring and ensuring good QoE of multimedia content is highly subjective in nature.
The most commonly used subjective method for quality measurement is the mean opinion score
(MOS). MOS is standardized in the ITU-T recommendations~\cite{MOS}, and it is defined as a numeric
value going from 1 to 5 (i.e., bad to excellent). Although the MOS rating method has a long history of pervasive use, it suffers from three fundamental problems: (i) Unable to concretely define the concept of scale; (ii) Dissimilar interpretations of the scale among users; (iii) Difficult to verify whether a participant gives false ratings
either intentionally or carelessly~\cite{MM09}.

Therefore, to address the problems above, we turn to an alternative approach by leveraging the pairwise preference information (i.e., pairwise comparison) obtained from raters. Pairwise comparison has a long history, dating
back to the $18^{th}$ century. It also has many nice properties. For example, pairwise comparison is
a relative measure which is easier to conduct than absolute rating scores and it helps reduce bias from the rating scale. In Netflix dataset, the rating matrix is 99\% incomplete, whereas the paired comparison matrix is only 0.22\%
incomplete and most entries are supported by many comparisons~\cite{Hodge}. In some cases such as tennis tournaments, even only pairwise comparison is possible. However, since the number of pairs $n \choose 2$ grows quadratically with the number of alternatives under investigation, this approach may be an expensive and time-consuming process in a laboratory setting.

To meet this challenge, with the advent of ubiquitous Internet access, the crowdsourcing strategy arises to be a promising alternative approach~\cite{Crowdsourcing}. It provides an easy and relatively inexpensive way to accomplish small and simple tasks, such
as Human Intelligence Tasks (HITs), and to effectively utilize the wisdom of the commons to solve complicated projects. Typically,
in a crowdsourcing scenario, each individual contributor is asked to solve a part of a big problem, and a computational algorithm is then developed to combine the partial solutions into an integrated one. Because of the considerable size of the Internet crowd, crowdsourcing could provide us efficient and reliable QoE assessments taking advantage of the power of the mass~\cite{Wu13crowd}.

Methods for rating/ranking via pairwise comparison in QoE evaluation in crowdsourcing scenario must address a number of inherent difficulties including: (i) incomplete and imbalanced data; (ii) streaming and online data; (iii) outlier detection. To meet the first challenge, the work in~\cite{added,MM11,tmm12} propose randomized paired comparison methods which accommodate incomplete and imbalanced data, a general framework called \emph{HodgeRank on random graphs} (HRRG). It not only can deal with incomplete and imbalanced data collected from crowdsourcing studies but also derives the constraints on sampling complexity in crowdsourcing experiment that the random selection must adhere to. Furthermore, a recent extension of HRRG is introduced in~\cite{MM12,TMM13} to deal with streaming and online data in crowdsourcing scenario in the second challenge, providing the possibility of making assessment procedure significantly faster than~\cite {MM11, tmm12} without deteriorating the accuracy.

The third challenge of crowdsourcing QoE evaluations is the fact that not every Internet user is trustworthy. In other words, due to the lack of supervision when subjects perform experiments in crowdsourcing, they may provide erroneous responses perfunctorily, carelessly, or dishonestly~\cite{MM09}. Such random decisions are useless and may deviate significantly from other raters' decisions. Such outliers have to be identified to achieve a robust QoE evaluation. In~\cite{MM09}, Transitivity Satisfaction Rate (TSR), which checks all the intransitive triangles, e.g., $A \succ B \succ C\succ A$, is proposed for outlier detection. TSR is defined as the number of judgment triplets (e.g., the three preference relations among A, B, and C) satisfying transitivity divided by the total number of triplets where transitivity may apply; thus, the value of TSR is always between 0 and 1. If a participant's judgments are consistent throughout all the rounds of an experiment, TSR will be 1; otherwise it will be smaller than 1. In this way, we can identify and discard noisy data provided by unreliable assessors. However, TSR can only be applied for complete and balanced paired comparison data. When the paired data are incomplete and imblanced, {i.e.}, having missing edges, the question of how to detect the noisy pairs remains open. The work in~\cite{MM13} attacks this problem and formulates the outlier detection as a LASSO problem based on sparse approximations of cyclic ranking projection of paired comparison data in Hodge decomposition. Regularization paths of the LASSO problem could provide an order on samples tending to be outliers. However, as every sample contributes an outlier indicator variable, solving such a large scale LASSO is expensive, not mentioning the additional cost on model selection via cross-validation, AIC (Akaike Information Criterion), or BIC (Bayesian Information Criterion) which may not even work well in outlier detection~\cite{SheOwe11}.

In this paper, we propose a simple yet effective algorithm for outlier detection and robust ranking via iterative Least Trimmed Squares (iLTS). This new method is fast, about 190 times faster than LASSO in our test, and adaptive, which could purify data automatically without \emph{a priori} knowledge on the amount of outliers. In our experimental studies on both simulated and real-world data, the method provide comparable results to LASSO in both outlier detection and robust evaluation scores. Therefore it is a promising tool for crowdsourcing robust QoE evaluation.

The remainder of this paper is organized as follows. Section~\ref{sec:relatedwork}
contains a review of related work. Then we describe the proposed framework in Section~\ref{sec:Outlier_Detection}, which establishes some fast and adaptive algorithms based on iterative least trimmed squares. Detailed experiments are presented in Section
\ref{sec:experiments}, followed by the conclusions in Section~\ref{sec:conclusions}.

\section{RELATED WORK}\label{sec:relatedwork}
\subsection{QoE Evaluation}

QoE of multimedia content can be divided into two categories: subjective assessment and objective assessment. In subjective viewing tests, stimuli are shown to a group of viewers, and then their opinions are recorded and averaged to evaluate the quality of the stimuli. This process is labor-intensive and time-consuming. On the contrary, objective assessment predicts the perceived quality automatically and intelligently by building objective quality models (see~\cite{lin}, a survey paper, and its references). Objective methods are indeed convenient to use, whereas it can not capture the \emph{true feelings} of users' experiences. Therefore, to obtain factual QoE evaluation results, subjective methods are still required, even though the cost is higher.

A variety of approaches can be employed to conducting subjective tests, among which mean opinion score (MOS)~\cite{MOS} and paired comparison are the two most popular ones. In the MOS test, individuals are asked to specify a rating from ``Bad" to ``Excellent" (e.g., Bad-1, Poor-2, Fair-3, Good-4, and Excellent-5) to grade the quality of a stimulus; while in paired comparison approach, raters are only asked to make intuitive comparative judgements instead of mapping their perception on a categorical or numerical scale. Among these there may be tradeoffs in the amount of information the preference label contains and the bias associated with obtaining the label. For
example, while a graded relevance judgment on a five-point
scale may contain more information than a binary judgment,
raters may also make more errors due to the complexity
of assigning finer-grained judgments. For this reason, the paired comparison method is currently gaining growing attention, which promises assessments
that are easier and faster to obtain, less demanding task for raters, and yields more reliable data with less personal scale bias in practice. A shortcoming of paired comparison is that it has more expensive sampling complexity than the MOS test. Therefore, how to make paired comparison method efficient and applicable in reality becomes a hot topic in recent years.

%
%Among various approaches to analyzing paired
%comparison results, a recent framework based on combinatorial Hodge
%Theory~\cite{Hodge} will be adopted in this paper, which is
%particularly suitable for the analysis of incomplete and imbalanced
%data distributed on a graph. Our work here exploits such a HodgeRank
%approach with random graph models in the context of subjective video
%quality assessment.
%Among various models to analyze paired comparison results, Analytic
%Hierarchy Process (AHP)~\cite{yao35} is a well-known one, which uses
%the preference priorities to construct a hierarchical framework that
%can help people make complex decisions. Besides, some probabilistic
%choice models~\cite{david88}, such as Thurstone-Mosteller~\cite{david88} and Bradley-Terry~\cite{david88, baradly} can extract a linear score for each entity.

\subsection{Crowdsourcing}

Crowdsourcing can be considered as a further development of the outsourcing principle, where tasks are submitted to an undefined and large group of people or community (a ``crowd") in the form of an open call, instead of a designated employee or subcontractor~\cite{Crowdsourcing}.  Most employers submitting tasks to an anonymous
crowd use mediators which maintain the crowd and
manage the employers¡¯ campaigns. These mediators are called crowdsourcing platforms. Among various crowdsourcing platforms, Amazon Mechanical Turk (\href{https://www.mturk.com}{MTurk}) is probably the most popular one, which provides a marketplace for a variety of tasks, and anyone who wishes to seek help from the Internet crowd can post their task requests on the website. Besides, \href{http://www.innocentive.com/}{InnoCentive}, \href{http://crowdflower.com/}{CrowdFlower},~\href{http://www.crowdrank.net/}{CrowdRank}, and \href{http://www.allourideas.org/}{AllOurIdeas} also bring the crowdsourcing revolution to various application fields.

With the help of these platforms, researchers can seek help from the Internet crowd to conduct user studies on document relevance~\cite{mm2}, document evaluation~\cite{mm25}, image annotation~\cite{mm34,MIR10}, music emotion recognition~\cite{MM13workshop1}, affection mining in computer games~\cite{MM13workshop2}, together with some studies on QoE evaluation~\cite{MM09,MM11,tmm12,MM12,conf2012-441,Keimel_etal_QoMEX2012_CrowdSourcing_Preprint}, etc. However, a major challenge of crowdsourcing QoE evaluation is that not every Internet user is trustworthy. That is, some raters try to
maximize their received payment while minimizing their own
effort and therefore submit low quality work to obtain such
a goal. Therefore, it is necessary to detect unreliable inputs and filter them out since they may cause inaccuracy in the estimation of QoE scores. For example, with complete and balanced data, TSR is proposed in~\cite{MM09} to measure the reliability of the participants' judgments. In contrast, the outlier detection method proposed in this paper is a general and simple one which could deal with not only complete and balanced paired comparison data, but also incomplete and imbalanced data.

%\subsection{Statistical Ranking}
%
%QoE based on paired comparisons can be recast as a statistical ranking or rating problem with paired comparison data. This problem has been widely studied in various fields including decision science~\cite{yao35},
%machine learning~\cite{yao20}, social choice~\cite{yao2}, and
%statistics~\cite{yao26}.
%
%In particular, recent work in~\cite{Hodge} takes a graph theoretic view, which maps paired comparison
%data to edge flows on a graph, possibly imbalanced (where different
%pairs may receive different number of comparisons)
%and incomplete (where each participant may only
%provide partial comparisons). It then applies the combinatorial Hodge Theory to achieve an orthogonal decomposition of such edge flows into three components: gradient flow for global rating (optimal in the L2-norm sense), triangular curl flow for local inconsistency, and harmonic flow for global inconsistency. Such a perspective provides us with a universal geometric description of the structure of paired comparison data, which may help understand various models, in particular the linear models with sparse outliers used in this paper.
%%Besides, an interesting variation of this L2-norm scheme is an analogous L1-projection onto the space of gradient flows. This optimization problem is applied to the case that the noise is sparse but can be large, often regarded as outliers.
%%It is more robust to outliers when compared with the L2-norm, and thus can be
%% regarded as robust ranking. For more details, readers may refer to~\cite{hodge_l1}.

\subsection{Statistical Ranking}

Statistical preference aggregation, in particular ranking or rating from pairwise comparisons,
is a classical problem which can be traced back to the $18^{th}$ century.
This subject area has been widely studied in various fields including the social choice or voting theory in Economics~\cite{Condorcet,Arrow51}, Psychology~\cite{Thurstone27,Saaty77}, Statistics~\cite{Noether60, David88}, Computer Vision~\cite{Yu09,Yu12,Osher11_retinex}, Information Retrieval~\cite{Pagerank,Hits}, Machine Learning~\cite{CorMohRas07,ICML14}, and others~\cite{Stefani77,Elo++,osting2013statistical}.

In particular, learning to rank trains a statistical model for ranking tasks. Popular approaches for learning to rank with pairwise comparisons include Active Ranking~\cite{ailon2012active, jamieson2011active}, IRSVM~\cite{cao2006}, RankNet~\cite{Burges2005}, and LambdaRank~\cite{Burges2006}. However, since learning to rank requires a feature vector representation
of the items to be ranked, they can not be directly applied to crowdsourced QoE evaluation.

In crowdsourced QoE evaluation, the purpose is not to predict the ranking based on features, but to aggregate a global ranking from the crowdsourcing pairwise preferences. Various methods have been proposed for crowdsourceable pairwise comparison ranking. In~\cite{chen2013pairwise}, it proposes a Bayesian framework to actively select pairwise comparison queries, and effectively combine the pairwise comparisons acquired by crowdsourcing to form a single ranking list. In~\cite{Yi2013}, it infers the preferences from crowdsourced pairwise comparison with matrix completion and compares it to collaborative filtering. In~\cite{Negahban2012}, it develops an iterative ranking aggregation algorithm from pairwise comparisons using Bradley-Terry model. Besides, there are two famous frameworks for QoE evaluation in Crowdsourcing: the \emph{Qudrant of Euphoria} by~\cite{chen2010quadrant} and  \emph{QualityCrowd} by~\cite {keimel2012challenges}.

\subsection {HodgeRank and Random Graphs}

HodgeRank, as an application of combinatorial Hodge theory to the preference or rank aggregation problem from pairwise comparison data, was first introduced in~\cite{Hodge}, inspiring a series of studies in statistical ranking~\cite{Hirani11,hodge_l1,osting2013enhanced} and game theory~\cite{Parrilo11_gameflow}, in addition to traditional applications in fluid mechanics~\cite{Chorin93} and computer vision~\cite{Yuan09_hodge,Osher11_retinex}, {etc}.

It is a general framework to decompose paired comparison data on graphs, possibly imbalanced (where different video pairs may receive different number of comparisons) and incomplete (where every participant may only give partial comparisons), into three orthogonal components. In these components HodgeRank not only provides us a mean to determine a global ranking from paired comparison data under various statistical models (e.g., Uniform, Thurstone-Mosteller, Bradley-Terry, and Angular Transform), but also measures the inconsistency of the global ranking obtained. The inconsistency shows the validity of the ranking obtained and can be further studied in terms of its geometric scale, namely whether the inconsistency in the ranking data arises locally or globally. Local inconsistency can be fully characterized by triangular cycles, while global inconsistency involves cycles consisting nodes more than three, which may arise due to data incompleteness and once presented with a large component indicates some serious conflicts in ranking data. However through random graphs, we can efficiently control global inconsistency.

Random graph is a graph generated by some random process. It starts with a set of \emph{n} vertices and adds edges between them at random. Different random graph models produce different probability distributions on graphs. Among various random graphs (i.e., the Erd\"{o}s-R\'{e}nyi random graph~\cite{ErdRen59}, random regular graph~\cite{k-regular-paper}, preferential attachment random graph~\cite{preattachment}, small world random graph~\cite {smallworld}, and geometric random graph~\cite{geo}), the most commonly studied one is the Erd\"{o}s-R\'{e}nyi random graph~\cite{ErdRen59}. It can be viewed as a random sampling process of pairs or edges independently and identically distributed (I.I.D.), and thus is well suited to crowdsourcing scenario where raters enter the test system in a dynamic and random way. In~\cite{MM11,tmm12}, a random design principle based on the Erd\"{o}s-R\'{e}nyi random graph theory is investigated to conduct crowdsourcing tests. It shows that for a large Erd\"{o}s-R\'{e}nyi random graph $G(n,p)$ with $n$ nodes and every edge sampled with probability $p$, $p\gg n^{-1}\log n$ is necessary to ensure the graph is connected and the inference of a global ranking is thus possible. To avoid global inconsistency from Hodge decomposition, it suffices to have larger sampling rates at $p\gg n^{-1/2}$. In this paper, we also focus on this simple yet powerful random graph model particularly in the scenarios where outliers are present.

\subsection{Outlier Detection}

Outliers are typically defined to be data samples that have unusual deviation from the remaining data. Hawkins formally defined in~\cite{hawkins1980identification} the concept of an outlier as follows: ``An outlier is an observation which deviates so much from the other observations as to arouse suspicions that it was generated by a different mechanism." Outliers are rare events, but once they have occurred, they may lead to a large instability of models estimated from the data. Statistical approaches were the earliest algorithms used for outlier detection, such as distribution-based, depth-based, distance-based, density-based, and clustering method~\cite{papadimitriou2003loci}. More recently, this problem has been studied quite extensively by the computer science community. In subjective quality evaluation in multimedia, there are several reasons why some user ratings are not reliable and need to be filtered out
in order to avoid false QoE results~\cite{Hossfeld2014}: the test subjects may not understand the test and the test instructions properly; wrong test conditions may occur due to errors
in the web-based test application or due to incompatibilities of the test application
with the subject's hard- and software; or the subjects do the test in a hurry resulting into sloppy work and unreliable results. Numerous efforts have been made in order to detect outliers and improve the quality of the results. In~\cite{MM13}, it formulates the outlier detection as a LASSO problem based on sparse approximations of cyclic ranking projection of paired comparison data. Then regularization paths of the LASSO problem could provide us an order on samples tending to be outliers. Such an approach is inspired by Huber's celebrated work on robust regression~\cite{Huber81}. On the other hand, recently~\cite{Yan13} proposed a fast algorithm called \emph{adaptive outlier pursuit} (AOP) for random-valued impulse noise removal, which has been applied to many applications in image and signal processing such as robust 1-bit compressive sensing~\cite{YanYO12}, robust binary fused compressive sensing~\cite{ZengF14}, and robust low rank matrix completion~\cite{YanYO13}. Such a work is based on iterative least trimmed squares. In this paper, we develop applications of AOP in the scenario of robust QoE evaluation. %, which is fast and adaptive with an automatic outlier detection without a prior knowledge on the amount of outliers.
%Equipped with a statistical linear model for the paired comparison data collected from an Internet-based crowd, namely HodgeRank model~\cite{tmm12}, we can infer a robust global ranking from noisy data.

\section{Iterative Least Trimmed Squares} \label{sec:Outlier_Detection}
In this section, we propose a method for automatic outlier detection without any priori information about the number of outliers. It adaptively detects outliers and obtains robust QoE evaluation with the outlier removal. Brief introductions on robust ranking are provided before the algorithm is described.

\subsection{The Problem of Robust Ranking}

%Statistical preference aggregation or ranking/rating from pairwise comparison data is a classical problem which can be traced back to the $18^{th}$ century with the discussions on voting and social choice. This subject area has undergone recent rapid growth in various applications due to the wide spread of the Internet which enables crowdsourcing techniques.

Assume that there are $m$ participants and $n$ items to be ranked. Let $Y_{ij}^\alpha$ denote the degree that participant $\alpha$ prefers item $i$ to item $j$.
Without loss of generality, one assumes that $Y_{ij}^\alpha>0$ if $\alpha$ prefers $i$ to $j$ and $Y_{ij}^{\alpha}< 0$ otherwise.
In addition, we assume that the paired comparison data is \emph{skew-symmetric} for each $\alpha$, i.e., $Y^\alpha_{ij}=-Y^\alpha_{ji}$. The strategy used in QoE evaluation can be dichotomous
choice or a $k$-point Likert scale with $k\geq3$. In this paper, we shall focus on the dichotomous choice, in which $Y_{ij}^\alpha$ can be taken as $\{\pm 1\}$ only. However, the theory can be applied to more general cases with $k$-point Likert scales. %How to choose $Y_{ij}^\alpha$ can be found in~\cite{MM11}.

In subjective multimedia assessment, it is natural to assume
\begin{equation} \label{eq:linear}
Y_{ij}^\alpha = \mbox{sign}(s_i^\ast - s_j^\ast + Z_{ij}^\alpha),
\end{equation}
where $\mbox{sign}(\cdot)=\pm1$ measures the sign of the value, $\vs^*=\{s_1^*,\cdots,s_n^*\}\in \R^{n}$ is the true scaling score on $n$ items and $Z_{ij}^\alpha$ is the noise. In practice the global rating score $\vs=\{s_1,\cdots,s_n\}$ can be obtained by solving the following optimization problem
\begin{equation} \label{eq:ho_rank0}
\Min_{\vs\in {\mathbb{R}}^{n}} \sum_{i\neq j,\alpha} W_{ij}^\alpha \L(s_i - s_j, Y_{ij}^\alpha),
\end{equation}
where $\L(x,y):\R\times \R\to \R$ is a loss function depending on the distribution of the noise, $W_{ij}^\alpha$ denotes the importance weights (e.g., number of paired comparisons) on $\{i,j\}$ made by rater $\alpha$, and $s_i$ (or $s_j$) represents the global ranking score of item \emph{i} (or \emph{j}). A geometric interpretation of \eqref{eq:ho_rank0} is to look for some potential function $\vs:[n]\to \R$ whose \emph{gradient} captures main variations in paired comparison data $Y$.

If the noise is independent and identically distributed (i.i.d.), the Gauss-Markov theorem tells us that the unbiased estimator with minimal variance is obtained by the choice of square loss $\L(x,y)=(x-y)^2$. In this case the global rating score $\vs$ satisfies the normal equation:
\begin{equation}
\label{eq:HodgeRank}
\vL\vs = \vb,
\end{equation}
where $\vL=\vD-\vA$ is the unnormalized graph Laplacian defined by $A_{ij}=\sum_\alpha W_{ij}^\alpha$ and $\vD$ is the diagonal matrix with $D_{ii}=\sum_{j,\alpha} W_{ij}^\alpha$, $\vb$ is the divergence flow defined by $b_i = \sum_{j,\alpha}W_{ij}^\alpha Y_{ij}^\alpha$. Such an algorithm has been used in~\cite{MM11,tmm12,MM12} to derive scaling scores in subjective multimedia assessment. Via combinatorial Hodge decomposition \cite{Hodge,tmm12}, the residue of the least squares solution $r_{ij}^\alpha = Y_{ij}^\alpha - s_i - s_j$ can be interpreted as \emph{cyclic rankings} on $n$ items.

However, not all comparisons are trustworthy and there may be sparse outliers due to different test conditions, human errors, or abnormal variations in context. Putting in a mathematical way, here we consider
\begin{equation}
Z_{ij}^\alpha=E_{ij}^\alpha + N_{ij}^\alpha,
\end{equation}
where outlier $E_{ij}^\alpha$ has a much larger magnitude than $N_{ij}^\alpha$ and is sparse as zero with probability $p\in (0,1]$. When sparse outliers exist, (\ref{eq:ho_rank0}) becomes unstable and may give bad estimation. If the outliers can be detected and removed, then the solution from least squares on the remaining comparisons is more accurate and gives a better estimation.

In ~\cite{MM13}, the famous Huber's loss~\cite{Huber81} is chosen for robust ranking as $\L(s_i-x_j-Y_{ij}^\alpha) = \rho_\lambda(s_i - s_j - Y_{ij}^\alpha)$ where
    \begin{equation*}
\rho_\lambda(x) =
    \left\{
    \displaystyle \begin{array}{ll}
        x^2/2, & \textrm{if $|x|\leq \lambda$}\\
        \lambda |x| - \lambda^2/2, & \textrm{if $|x|> \lambda$.}
    \end{array}
    \right.
    \end{equation*}
When $|s_i - s_j - Y_{ij}^\alpha| < \lambda$, the comparison is regarded as a ``good" one with Gaussian noise and L2-norm penalty is used on the residual. Otherwise, it is regarded as a ``bad'' one contaminated by outliers and one uses L1-norm penalty which is less sensitive to the amount of deviation. Assume that the importance weights are the same ($W_{ij}^\alpha=1$). In this case, \eqref{eq:ho_rank0} is equivalent to the following LASSO problem, often called Huber-LASSO,
\begin{equation} \label{eq:hlasso}
\Min\limits_{\vs\in {\mathbb{R}}^{n},\vE}  \sum\limits_{i,j,\alpha} \frac{1}{2} (s_i - s_j- Y_{ij}^\alpha+E_{ij}^\alpha)^2+\lambda \|\vE\|_1.
\end{equation}
A simple geometric interpretation from Hodge decomposition \cite{MM13} is that the outlier $\vE$ is a sparse approximation of \emph{cyclic} ranking projection which summarizes the conflicts of interests among voters.

There are a couple of issues in such a Huber-LASSO approach~\cite{MM13}: 1) the LASSO estimator is well-known to be biased; 2) the computational cost of Huber-LASSO path is expensive as every sample is associated with an outlier indicator variable $E^\alpha_{ij}$. To solve (1), one typically exploits Huber-LASSO in outlier detection, followed by a subset least squares with only non-outlier samples. This is often called Least Trimmed Squares (LTS) in robust statistics \cite{LTS}. In the remaining of this section, we will see some iterative versions of LTS leads to fast algorithms for robust ranking which automatically finds the number of outliers in practice.

\subsection{Least Trimmed Squares}

Given $K$ as the number of outliers, the least trimmed squares model can be written as
\begin{equation} \label{eq:ho_rank_aop}
\left\{\begin{array}{rl}
\Min\limits_{\vs\in {\mathbb{R}}^{n},\Lambda}& \sum\limits_{i,j,\alpha} \Lambda_{ij}^\alpha (s_i - s_j -Y_{ij}^\alpha)^2,\\
\textnormal{subject to }&\sum\limits_{i,j,\alpha}(1-\Lambda_{ij}^\alpha)\leq K, \Lambda_{ij}^\alpha\in\{0,1\},
\end{array}\right.
\end{equation}
where $\Lambda_{ij}^\alpha$ is used to denote the outlier as follows:
\begin{align}
\Lambda_{ij}^\alpha =\left\{\begin{array}{ll}0, & \textnormal{ if }Y_{ij}^\alpha \textnormal{ is a outlier},\\
1,& \textnormal{ otherwise}.
\end{array}\right.
\end{align}
\begin{remark}
When the importance weights are not the same, we can modify the problem into
\begin{equation*}
\left\{\begin{array}{rl}
\Min\limits_{\vs\in {\mathbb{R}}^{n},\Lambda}& \sum\limits_{i,j,\alpha} \Lambda_{ij}^\alpha W_{ij}^\alpha (s_i - s_j-Y_{ij}^\alpha)^2,\\
\textnormal{subject to }&\sum\limits_{i,j,\alpha}(1-\Lambda_{ij}^\alpha)W_{ij}^\alpha\leq K, \Lambda_{ij}^\alpha\in\{0,1\}.
\end{array}\right.
\end{equation*}
\end{remark}

Let
\begin{align*}
F(\vs,\Lambda)=&\sum\limits_{i,j,\alpha} \Lambda_{ij}^\alpha (s_i - s_j- Y_{ij}^\alpha)^2\\
& +\iota_{\{\Lambda:\sum_{i,j,\alpha}(1-\Lambda_{ij}^\alpha)\leq K, \Lambda_{ij}^\alpha\in\{0,1\}\}}
\end{align*}
where $\iota_{\{\Lambda:\sum_{i,j,\alpha}(1-\Lambda_{ij}^\alpha)\leq K, \Lambda_{ij}^\alpha\in\{0,1\}\}}$ is the indicator function which equals to zero when both $\sum_{i,j,\alpha}(1-\Lambda_{ij}^\alpha)\leq K$ and $\Lambda_{ij}^\alpha\in\{0,1\}$ are satisfied and $+\infty$ otherwise, then problem~\eqref{eq:ho_rank_aop} is equivalent to
\begin{align}
\Min_{\vs\in \mathbb{R}^{n},\Lambda}F(\vs,\Lambda).
\end{align}

This is a nonconvex optimization problem. However one can split the minimization over $\Lambda$ and $\vs$ into two steps. For solving the problem in $\vs$ with $\Lambda$ fixed, it is a convex least squares problem, and the problem of finding $\Lambda$ with $\vs$ fixed can be solved in one step. These two subproblems are:

1) Fix $\Lambda$ and update $\vs$. We need to solve a least squares problem with the comparisons that are detected to be outliers removed.

2) Fix $\vs$ and update $\Lambda$. This time we are solving
\begin{equation}
\left\{\begin{array}{rl}
\Min\limits_{\Lambda}& \sum\limits_{i,j,\alpha} \Lambda_{ij}^\alpha (s_i - s_j-Y_{ij}^\alpha)^2,\\
\textnormal{subject to }&\sum\limits_{i,j,\alpha}(1-\Lambda_{ij}^\alpha)\leq K, \Lambda_{ij}^\alpha\in\{0,1\}.
\end{array}\right.
\end{equation}
This problem is to choose $K$ elements with largest summation from the set $\{(s_i - s_j- Y_{ij}^\alpha)^2\}$. Denoting $\tau$ as the value of the $K$th largest term in that set, $\Lambda$ can then be calculated by

\begin{align}\label{lambdaupdate}
\Lambda_{ij}^\alpha =\left\{\begin{array}{ll}1, & \textnormal{ if }(s_i - s_j - Y_{ij}^\alpha)^2<\tau,\\
0,& \textnormal{ otherwise}.
\end{array}\right.
\end{align}
If the $K$th and $(K+1)$th largest terms have the same value, then we can choose any $\Lambda$ such that
$\sum\limits_{i,j,\alpha}(1-\Lambda_{ij}^\alpha)\leq K$ and
\begin{align}\label{lambdaupdate2}
\min_{i,j,\alpha,\Lambda_{ij}^\alpha=0} (s_i - s_j - Y_{ij}^\alpha)^2\geq \max_{i,j,\alpha,\Lambda_{ij}^\alpha=1} (s_i - s_j - Y_{ij}^\alpha)^2.
\end{align}

Such a procedure is described precisely in the following algorithm.
\begin{algorithm}[H]
\caption{Iterative Least Trimmed Squares with $K$}	\label{alg:ilts}
	\begin{algorithmic}
		\State \textbf{Input:} $\{Y_{ij}^\alpha\}$, $K\geq0$.
		\State \textbf{Initialization:} $k=0$, $\Lambda_{ij}^\alpha=1$.
			\For{$k=1,2,\cdots$}
				%\vspace{-0.25cm}
				\State Update $\vs^k$ by solving the least squares problem (\ref{eq:ho_rank0}) using only the comparisons with $\Lambda_{ij}^\alpha=1$.
                \State Update $\Lambda^k$ from \eqref{lambdaupdate} or \eqref{lambdaupdate2} with one different from previous ones.
  %              \State $k =k+1$.
                %\vspace{0.1cm}
			\EndFor
			\State \Return $\vs$.
	\end{algorithmic}
\end{algorithm}

%
%\subsection{Adaptive Outlier Pursuit}
%
%
%\begin{remark}The problem is also equivalent to
%\begin{equation*}
%\left\{\begin{array}{rl}
%\Min\limits_{\vs\in {\mathbb{R}}^{n},\vE}& \sum\limits_{i,j,\alpha} \Lambda_{ij}^\alpha (s_i - s_j- Y_{ij}^\alpha+E_{ij}^\alpha)^2,\\
%\textnormal{subject to }&\|\vE\|_{0}\leq K.
%\end{array}\right.
%\end{equation*}
%
%The difference is on how to enforce the sparsity of $\vE$. For $F(\vs,\Lambda)$, we can relax the constraint $\Lambda_{ij}^\alpha\in\{0,1\}$ to $\Lambda_{ij}^\alpha\in[0,1]$ and have the following multi-convex problem:
%\begin{equation*}
%\left\{\begin{array}{rl}
%\Min\limits_{\vs\in {\mathbb{R}}^{n},\Lambda}& \sum\limits_{i,j,\alpha} \Lambda_{ij}^\alpha (s_i - s_j- Y_{ij}^\alpha)^2,\\
%\textnormal{subject to }&\sum\limits_{i,j,\alpha}(1-\Lambda_{ij}^\alpha)\leq K, \Lambda_{ij}^\alpha\in[0,1].
%\end{array}\right.
%\end{equation*}
%When the importance weights are not the same, we can modify the problem into
%\begin{equation*}
%\left\{\begin{array}{rl}
%\Min\limits_{\vs\in {\mathbb{R}}^{n},\Lambda}& \sum\limits_{i,j,\alpha} \Lambda_{ij}^\alpha W_{ij}^\alpha (s_i - s_j-Y_{ij}^\alpha)^2,\\
%\textnormal{subject to }&\sum\limits_{i,j,\alpha}(1-\Lambda_{ij}^\alpha)W_{ij}^\alpha\leq K, \Lambda_{ij}^\alpha\in\{0,1\}.
%\end{array}\right.
%\end{equation*}
%\end{remark}

%\subsection{Convergence Analysis of AOP}
%We establish some convergence results for AOP in this section.
Let $E(\vs)=\min_{\Lambda}F(\vs,\Lambda)$ and we have the following theorem about the convergence of Algorithm~\ref{alg:ilts}.

\begin{theorem} Algorithm~\ref{alg:ilts} will converge in finite steps and the output $\vs$ is a local minimum point of $E(\vs)$.
\end{theorem}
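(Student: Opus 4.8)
The plan is to read Algorithm~\ref{alg:ilts} as exact alternating minimization of the objective $F(\vs,\Lambda)$ over the two blocks $\vs\in\R^{n}$ and $\Lambda$, where $\Lambda$ ranges over the \emph{finite} feasible set $G$ of binary weight vectors with at most $K$ zeros (those $\Lambda$ with $\Lambda_{ij}^\alpha\in\{0,1\}$ and $\sum_{i,j,\alpha}(1-\Lambda_{ij}^\alpha)\le K$), and then to combine monotone descent with the finiteness of $G$. First I would observe that each half-step minimizes $F$ exactly in one block: the $\vs$-update solves the convex quadratic $\min_{\vs}F(\vs,\Lambda^{k-1})$ (the graph-Laplacian normal equations \eqref{eq:HodgeRank} restricted to the retained comparisons, which always has a minimizer), while the $\Lambda$-update \eqref{lambdaupdate}--\eqref{lambdaupdate2} solves $\min_{\Lambda}F(\vs^{k},\Lambda)$ by zeroing out the $K$ largest squared residuals. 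Hence
\[
F(\vs^{k+1},\Lambda^{k})\ \le\ F(\vs^{k},\Lambda^{k})\ \le\ F(\vs^{k},\Lambda^{k-1}),
\]
so, writing $\phi(\Lambda):=\min_{\vs}F(\vs,\Lambda)$, the sequence $F(\vs^{k},\Lambda^{k-1})=\phi(\Lambda^{k-1})$ is non-increasing in $k$ and bounded below by $0$.

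For finite termination I would use that $\phi$ takes only finitely many values (as $G$ is finite), so the non-increasing sequence $\phi(\Lambda^{k-1})$ is eventually constant. Moreover the update rule forces each newly produced $\Lambda^{k}$ to be distinct from all $\Lambda^{0},\dots,\Lambda^{k-1}$, and $G$ is finite, so the loop cannot continue indefinitely: after at most $|G|$ steps no admissible fresh $\Lambda$ exists and the algorithm halts, outputting the last computed $\vs^{\star}$.

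For local minimality, set $E(\vs)=\min_{\Lambda\in G}F(\vs,\Lambda)$, a pointwise minimum of finitely many convex quadratics in $\vs$, hence continuous. At termination let $\Lambda^{\star}\in\arg\min_{\Lambda}F(\vs^{\star},\Lambda)$, so $E(\vs^{\star})=F(\vs^{\star},\Lambda^{\star})$; and note that by the stopping rule every minimizer of $F(\vs^{\star},\cdot)$ must already have appeared as some $\Lambda^{j}$ during the run. Split $G=G_{1}\cup G_{2}$ with $G_{1}=\{\Lambda\in G:F(\vs^{\star},\Lambda)=E(\vs^{\star})\}$ and $G_{2}=G\setminus G_{1}$. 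For $\Lambda\in G_{2}$ one has $F(\vs^{\star},\Lambda)>E(\vs^{\star})$, so by continuity of $F(\cdot,\Lambda)$ and finiteness of $G_{2}$ there is a neighbourhood $U$ of $\vs^{\star}$ on which $F(\vs,\Lambda)>E(\vs^{\star})$ for all $\Lambda\in G_{2}$. For $\Lambda\in G_{1}$, $\Lambda$ minimizes $F(\vs^{\star},\cdot)$ and hence equals some $\Lambda^{j}$; since the $\phi(\Lambda^{i})$ are non-increasing and, because $\vs^{\star}$ is the last least-squares solution, $E(\vs^{\star})\le F(\vs^{\star},\Lambda^{k-1})=\phi(\Lambda^{k-1})$, we get
\[
E(\vs^{\star})\ \le\ \phi(\Lambda^{k-1})\ \le\ \phi(\Lambda^{j})\ =\ \phi(\Lambda)\ \le\ F(\vs^{\star},\Lambda)\ =\ E(\vs^{\star}),
\]
so $\phi(\Lambda)=E(\vs^{\star})$ and therefore $F(\vs,\Lambda)\ge E(\vs^{\star})$ for \emph{every} $\vs$. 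Combining both cases, for $\vs\in U$ we get $E(\vs)=\min_{\Lambda\in G}F(\vs,\Lambda)\ge E(\vs^{\star})$, so $\vs^{\star}$ is a local minimizer of $E$.

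I expect the real obstacle to be precisely this last step --- the weights $\Lambda\in G_{1}\setminus\{\Lambda^{\star}\}$ that are \emph{tied} with $\Lambda^{\star}$ at $\vs^{\star}$. Without the rule ``choose a $\Lambda$ different from the previous ones'' the alternating scheme could oscillate among such ties forever, or stall at a point that is not a local minimum of $E$; the proof must therefore genuinely use that rule, via the monotone sequence $\phi(\Lambda^{k})$, to conclude that each tied weight has already been exercised and so bounds $F(\cdot,\Lambda)$ from below by $E(\vs^{\star})$ on a whole neighbourhood rather than merely at $\vs^{\star}$. By comparison, the exactness of the two sub-minimizations, the finiteness of $G$, and the continuity of a finite minimum of quadratics are routine.
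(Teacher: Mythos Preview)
Your argument is correct and follows the same overall route as the paper: monotone descent of $F$ along the alternating updates, finite termination from the distinctness of the $\Lambda^k$ in the finite feasible set $G$, and local minimality by exploiting that, at termination, every $\Lambda$ tied with the optimum at $\vs^\star$ has already been visited (this is exactly where both proofs use the ``different from previous ones'' rule). The only difference is in how the local minimality step is organised: the paper introduces the threshold $\tau^k$ and the index sets $\Lambda_\pm$ to argue that any optimal $\bar\Lambda$ for a nearby $\vs$ is also optimal at $\vs^k$ and hence among the visited $\Lambda^j$, whereas you bypass the threshold entirely by splitting $G=G_1\cup G_2$ and bounding $F(\vs,\Lambda)$ separately on each part via the monotone $\phi$-sequence and continuity. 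Your presentation is slightly more abstract in that it never uses the specific sorting structure of the $\Lambda$-update, but the content of the two arguments is the same.
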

\begin{proof} From the algorithm, we have
\begin{align}F(\vs^k,\Lambda^k)\geq F(\vs^{k+1},\Lambda^k)\geq F(\vs^{k+1},\Lambda^{k+1}).
\end{align}
Additionally there are only finite number of $\Lambda$'s. Therefore the algorithm will stop in finite steps if the $\vs$-subproblem is solved exactly. Assume that we have $F(\vs^k,\Lambda^k)= F(\vs^{k+1},\Lambda^{k+1})$. Thus
\begin{align*}F(\vs^k,\Lambda^k)&=F(\vs^{k+1},\Lambda^k)=\min_\vs F(\vs,\Lambda^k),\\
F(\vs^k,\Lambda^k)&=\min_\Lambda F(\vs^k,\Lambda)=E(\vs^k),\end{align*}
which means that $(\vs^k,\Lambda^k)$ is a coordinatewise minimum point of $F(\vs,\Lambda)$. We will show that $\vs^k$ is a local minimum point of $E(\vs)$.

Let $\tau^k$ be the $K$th largest term of $\{(s^k_i-s^k_j-Y_{ij}^\alpha)^2\}$, and define $\Lambda_+=\{(i,j,\alpha):(s^k_i-s^k_j-Y_{ij}^\alpha)^2>\tau^k\}$ and $\Lambda_-=\{(i,j,\alpha):(s^k_i-s^k_j-Y_{ij}^\alpha)^2<\tau^k\}$. Then we can find $\epsilon>0$ such that when $\|\vs-\vs^k\|_2<\epsilon$, we have $(s_i-s_j-Y_{ij}^\alpha)^2>\tau$ for all $(i,j,\alpha)\in \Lambda_+$ and $(s_i-s_j-Y_{ij}^\alpha)^2<\tau$ for all $(i,j,\alpha)\in\Lambda_-$, where $\tau$ is the $K$th largest term of $\{(s_i-s_j-Y_{ij}^\alpha)^2\}$. Notice that $E(\vs)=\min_{\Lambda}F(\vs,\Lambda)$, then there is $\bar{\Lambda}$ such that $E(\vs)=F(\vs,\bar{\Lambda})$, with $\bar{\Lambda}_{ij}^\alpha>0$ when $(i,j,\alpha)\in \Lambda_+$ and $\bar{\Lambda}_{ij}^\alpha<0$ when $(i,j,\alpha)\in\Lambda_-$. Thus $E(\vs^k)=F(\vs^k,\bar{\Lambda})$. In addition, we have $F(\vs^k,\bar{\Lambda})\leq F(\vs,\bar{\Lambda})$ because all $\Lambda$'s satisfying~\eqref{lambdaupdate2} for $\vs^k$ are chosen before the algorithm stops. Hence, $E(\vs^k)=F(\vs^k,\bar{\Lambda})\leq F(\vs,\bar{\Lambda})=E(\vs)$, and $\vs^k$ is a local minimizer of $E(\vs)$.
\end{proof}

\subsection{Adaptive Least Trimmed Squares}
If the number of outlier $K$ is given, Algorithm~\ref{alg:ilts} can be used to detect the outliers and improve the performance of least squares. However, in practice, the exact number of outliers $K$ may be unknown. If $K$ is underestimated, some remaining outliers will still damage the performance. On the other hand, if $K$ is overestimated, too many outliers are removed, and the resulting data is not enough or too biased for QoE evaluation. For a few applications such as impulse noise removal, the number of outliers can be estimated accurately, while it is difficult for many applications including crowdsourceable QoE evaluation. Therefore, a outlier detection method which can automatically estimate the number of outliers is strongly needed.

In the following, we propose a method to estimate the number of outliers automatically. At first, when the number of outliers is unknown, we can use least squares to find an estimate of $\vs$, then the number of outliers according to this $\vs$ can be calculated, i.e., the total number of comparisons with wrong directions ($Y_{ij}^\alpha$ has different sign with $s_i-s_j$) denoted as $\widetilde{K}$. Because this $\vs$ is not accurate and $\widetilde{K}$ is an overestimation of $K$. We can underestimate the number of outliers as $\utilde{K}=\beta_1\widetilde{K}$ ({$\beta_1\in (0,1)$),  With this underestimate $\utilde{K}$, we can solve the least squares problem after the $\utilde{K}$ comparison that are considered to be outliers removed and obtain an improved $\vs$. Then we have to increase the estimation of the number $\utilde{K}$ by $\beta_2$ ($\beta_2\in (1,\infty)$), but the number can not be larger than $\widetilde{K}$, the total number of comparisons mismatching the current score, because there are only $\widetilde{K}$ outliers with the current score. Therefore the update of $\utilde{K}$ is just $\utilde{K}=\min(\lfloor\beta_2\utilde{K}\rfloor,\widetilde{K})$ where $\lfloor x \rfloor$ ($\lceil x \rceil$) is the greatest (smallest) integer no larger (less) than $x\in \R^+$. The weight $\Lambda_{ij}^\alpha$ for the new least trimmed squares problem is binary (0 or 1) and determined by $\utilde{K}$ largest outliers. Iterations go on until a fixed point is met where $\utilde{K}=\widetilde{K}$ gives the estimated number of outliers. Algorithm \ref{alg:IRLS} describes such a procedure precisely, which is called here \emph{Iterative Least Trimmed Squares without $K$} or simply \emph{Adaptive Least Trimmed Squares}.

\begin{algorithm}[h]
%\caption{AOP with $K$ update (iterated Least Trimmed Squares)}
\caption{Adaptive Least Trimmed Squares}	
\label{alg:IRLS}
	\begin{algorithmic}
		\State \textbf{Input:} $\{Y_{ij}^\alpha\}$, $\textrm{Miter}>0$, $\beta_1<1$, $\beta_2>1$.
		\State \textbf{Initialization:} $k=0$, $\Lambda_{i,j}^\alpha(k)=1$, $\utilde{K}_k=0$.
		\For{$k=1,\cdots,$ Miter}				
		           	\State Update $\vs^k$ with least squares (\ref{eq:ho_rank0}) using only the comparisons with $\Lambda_{ij}^\alpha(k-1)=1$.
			        \State Let $\widetilde{K}_k$ be the total number of comparisons with wrong directions, i.e., $Y_{ij}^\alpha$ has different sign with $s^k_i-s^k_j$.
%			        \State {\bf If} $k=1$, Let $\utilde{K}= \beta_1*\widetilde{K}_k$.
			        \begin{equation}
\utilde{K}_k=\left\{
\begin{array}{lr}
\lfloor\beta_1 \widetilde{K}_{k}\rfloor,& \mbox{if $k=1$}; \\
\min(\lfloor \beta_2 \utilde{K}_{k-1}\rfloor, \widetilde{K}_k),& \mbox{otherwise},
\end{array}
\right.
\end{equation}
%		          	\State {\bf If} $k>1$, Let $\utilde{K}=\min(\beta_2*\utilde{K},\widetilde{K})$.
			        \State {\bf If} $\utilde{K}_k=\widetilde{K}_k$, break.
		          	\State Update $\Lambda(k)$ using \eqref{lambdaupdate} or \eqref{lambdaupdate2} with $K=\utilde{K}_k$.
		\EndFor			
					        	\State Find $\hat{\vs}$ with least squares (\ref{eq:ho_rank0}) using only the samples with $\Lambda_{ij}^\alpha(k)=1$.
%\State \textcolor{red}{Find $\hat{K}$ as the total number of mismatches between $Y_{ij}^\alpha$ and $\hat{s}_i-\hat{s}_j$.}
			\State \Return $\hat{\vs}$, $\hat{K}=\widetilde{K}_k$.
	\end{algorithmic}
\end{algorithm}

\begin{remark} \label{remark2}There are only two parameters to choose and they are easy to set. They are chosen according to following inequalities $\beta_1<1 < \beta_2$ ($\beta_1=0.75$ and $\beta_2=1.03$ are fixed in our numerical experiments). $\beta_1$ has to be small to make sure that the first estimation is underestimated. Then the underestimate $\utilde{K}$ is increasing geometrically with rate $\beta_2$ and $\beta_2$ can not be too large, because we do not want to increase the underestimate too much. \end{remark}

\begin{remark} \label{remark3}The algorithm is able to detect most of the outliers in our experiments with a maximum iteration number ${\rm Miter}=30$. However, there may be mistakes in the detection, and these mistakes happen mostly between two successive items in the order. Therefore, we can add one step to just compare every pair of two successive items and make the correction on the detection, i.e., if item $i$ is ranking above $j$ but the number of people choosing item $i$ over $j$ is less than the number of people choosing $j$ over $i$, we can remove those choosing $j$ over $i$ and keep those choosing $i$ over $j$.
\end{remark}

The algorithm always stops in finite steps even without a bound on ``${\rm Miter}$'', due to the following Lemma.
\begin{lemma}
 If $\widetilde{K}_k \leq C$ for $k\geq k_0$, Algorithm \ref{alg:IRLS} will stop in no more than $k^*$ steps, where
%\[ k^* =\left\lfloor \frac{\log C - \log \beta_1 \widetilde{K}_1}{ \log \beta_2} \right\rfloor + k_0 + 1. \]
\[ k^* =\left\lceil \frac{\log C - \log \beta_1 \widetilde{K}_1}{ \log \beta_2} \right\rceil. \]
\end{lemma}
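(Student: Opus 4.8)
The plan is to follow the underestimate sequence $\{\utilde{K}_k\}$ produced inside the \textbf{For} loop of Algorithm~\ref{alg:IRLS} and to show that it grows geometrically, at rate essentially $\beta_2$, until it is forced to reach $\widetilde{K}_k$, at which point the stopping test ``$\utilde{K}_k=\widetilde{K}_k$'' fires. First I would characterise non-termination: for every $k\ge 2$ one has $\utilde{K}_k=\min\bigl(\lfloor\beta_2\utilde{K}_{k-1}\rfloor,\widetilde{K}_k\bigr)\le\widetilde{K}_k$, so the test fails precisely when $\lfloor\beta_2\utilde{K}_{k-1}\rfloor<\widetilde{K}_k$, and in that case $\utilde{K}_k=\lfloor\beta_2\utilde{K}_{k-1}\rfloor$. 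Hence, as long as the algorithm has run through iteration $k$ without stopping, the recursion reduces to $\utilde{K}_k=\lfloor\beta_2\utilde{K}_{k-1}\rfloor$ and moreover $\utilde{K}_k<\widetilde{K}_k$.

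Next I would turn this into a step count. Momentarily ignoring the floors, the recursion with $\utilde{K}_1=\beta_1\widetilde{K}_1$ gives $\utilde{K}_k=\beta_1\widetilde{K}_1\,\beta_2^{\,k-1}$, so once $k$ is large enough that $\beta_1\widetilde{K}_1\,\beta_2^{\,k-1}\ge C\ge\widetilde{K}_k$ (using the hypothesis), the minimum in the update is attained by $\widetilde{K}_k$, i.e.\ $\utilde{K}_k=\widetilde{K}_k$ and the algorithm stops. Solving $\beta_1\widetilde{K}_1\,\beta_2^{\,k-1}\ge C$ for $k$ produces exactly a bound of the form $k^\ast=\bigl\lceil\tfrac{\log C-\log(\beta_1\widetilde{K}_1)}{\log\beta_2}\bigr\rceil$, the exact integer and the placement of any ``$+1$'' depending only on how the terminating iteration is counted. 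If $\widetilde{K}_k\le C$ holds only from some $k_0>1$ onward, the same argument is applied starting at iteration $k_0$; the early iterations only make $\utilde{K}$ larger, so they can only bring termination sooner.

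I expect the honest treatment of the floor operations to be the only delicate point. Since $\lfloor\beta_2 x\rfloor$ is not a clean multiple of $x$, and for small $x$ the rounding can stall the growth entirely (e.g.\ $\lfloor\beta_2\rfloor=1$ whenever $1<\beta_2<2$, so the iterate could in principle stick at a small integer), one must instead use $\lfloor\beta_2\utilde{K}_{k-1}\rfloor\ge\beta_2\utilde{K}_{k-1}-1$ and unroll to obtain $\utilde{K}_k\ge\beta_2^{\,k-1}\utilde{K}_1-\tfrac{\beta_2^{\,k-1}-1}{\beta_2-1}$, which still grows like a positive constant times $\beta_2^{\,k-1}$ as soon as $\utilde{K}_1>\tfrac{1}{\beta_2-1}$ --- that is, as soon as the first, non-robust, least-squares fit leaves a non-negligible number of wrong-direction comparisons, which is exactly the regime where the method is used. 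With that in hand, the remaining manipulations are routine arithmetic and the displayed bound follows.
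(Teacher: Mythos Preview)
Your proposal is correct and follows the same idea as the paper's proof: the underestimate $\utilde{K}_k$ grows (geometrically in $\beta_2$) while $\widetilde{K}_k$ is eventually bounded by $C$, forcing the stopping test $\utilde{K}_k=\widetilde{K}_k$ to fire within the stated number of iterations. The paper's own argument is a single sentence---``It follows from the fact that $\utilde{K}_k$ is a monotonic increasing sequence for $\beta_2>1$ and bounded $\widetilde{K}_k$''---so your treatment of the floor operations and the possible stalling when $\utilde{K}_1\le 1/(\beta_2-1)$ goes beyond what the paper records, but the underlying approach is the same.
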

\begin{proof}
It follows from the fact that $\utilde{K}_k$ is a monotonic increasing sequence for $\beta_2>1$ and bounded $\widetilde{K}_k$.
\end{proof}

However such a result only ensures that the algorithm stops at an overestimate on the correct number of outliers. The following theorem presents a stability condition such that Algorithm \ref{alg:IRLS} returns the correct number of outliers.
\begin{theorem} \label{thm:alts} Assume that for $k\geq k_0$, every sample subset $\supp(\Lambda(k-1))$ gives an order-consistent least squares estimator $\vs^k$, i.e., $\vs^k$ induces the same ranking order as the true score ${\vs}^\ast$, then Algorithm \ref{alg:IRLS} returns the correct number of outliers in $\hat{K}$.
\end{theorem}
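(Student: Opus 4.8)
The plan is first to pin down what ``the correct number of outliers'' means here: write $R=\{(i,j,\alpha):\sign(Y_{ij}^\alpha)\neq\sign(s_i^\ast-s_j^\ast)\}$ for the set of comparisons whose recorded preference contradicts the true ranking, and set $K^\ast=|R|$. (This is the meaningful target: removing exactly these comparisons leaves a dataset consistent with $\vs^\ast$, whereas outlier entries $E_{ij}^\alpha$ that happen to agree with the true direction neither need nor admit detection.) I will show that the value $\hat K=\widetilde K_k$ returned by Algorithm~\ref{alg:IRLS} equals $K^\ast$.

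The core step is to cash in the order-consistency hypothesis. For every $k\geq k_0$, the least-squares estimator $\vs^k$ computed from $\supp(\Lambda(k-1))$ induces the same ranking as $\vs^\ast$, so (excluding ties between distinct items, which is generic) $\sign(s_i^k-s_j^k)=\sign(s_i^\ast-s_j^\ast)$ for all $i\neq j$. Hence the comparisons that the algorithm flags at step $k$ as having ``wrong direction'' — those with $\sign(Y_{ij}^\alpha)\neq\sign(s_i^k-s_j^k)$ — are precisely the elements of $R$, independently of $k$; therefore $\widetilde K_k=K^\ast$ for every $k\geq k_0$. In particular $\widetilde K_k$ is eventually constant and bounded.

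Next I would invoke the preceding Lemma with $C=K^\ast$: since $\widetilde K_k\leq K^\ast$ for $k\geq k_0$ and $\utilde{K}_k$ is a non-decreasing integer sequence (because $\beta_2>1$ gives $\lfloor\beta_2 n\rfloor\geq n$) that is always $\leq\widetilde K_k$, the loop exits after finitely many iterations — either by triggering $\utilde{K}_k=\widetilde K_k$ or by reaching $\mathrm{Miter}$. Since the algorithm returns $\hat K=\widetilde K_k$ for the last index $k$ of the loop, and $\widetilde K_k=K^\ast$ as soon as $k\geq k_0$, we obtain $\hat K=K^\ast$ provided that last index is at least $k_0$. Note this conclusion is robust to the known degeneracy that $\utilde{K}_k$ can stall below $K^\ast$ when $\beta_2$ is close to $1$: even then the reported $\hat K=\widetilde K_{\mathrm{Miter}}=K^\ast$.

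The one genuinely delicate point — the step I expect to need the most care — is exactly that proviso: guaranteeing that the loop reaches iteration $k_0$ before exiting, i.e.\ that it does not \emph{terminate prematurely} with a stale, incorrect count (and that $\mathrm{Miter}\geq k_0$). Before $k_0$ the estimators need not be order-consistent and $\widetilde K_k$ may oscillate, so in principle the geometrically growing underestimate $\lfloor\beta_2\utilde{K}_{k-1}\rfloor$ could overshoot a temporarily small $\widetilde K_k$ and break early. I would close this by adding (or reading into the hypothesis) the mild stability condition that $\utilde{K}_k<\widetilde K_k$ for all $k<k_0$ — the design intent behind choosing $\beta_1$ small and $\beta_2$ just above $1$ (cf.\ Remark~\ref{remark2}) — or, equivalently, by treating the clean case $k_0=1$, where $\widetilde K_1=K^\ast$ already and no premature exit is possible. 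As an optional remark clarifying why order-consistency tends to propagate along the iterations, one can check that for $k\geq k_0$ with $\utilde{K}_k\leq K^\ast$, and as long as the fitted gaps satisfy $|s_i^k-s_j^k|\leq 2$, the $\utilde{K}_k$ largest residuals all lie in $R$ (a reversed comparison has residual $(|s_i^k-s_j^k|+1)^2\geq1$ against $(|s_i^k-s_j^k|-1)^2<1$ for a consistent one), so the trimming never discards a genuinely consistent comparison; this is not needed for the statement about $\hat K$ but explains the self-consistency of the scheme.
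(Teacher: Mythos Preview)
Your argument follows the paper's exactly: order-consistency for $k\geq k_0$ forces $\widetilde K_k\equiv K^\ast$, and then Lemma~1 gives finite termination with $\hat K=\widetilde K_k=K^\ast$. Your extra care about the possibility of premature exit before iteration $k_0$ is well placed --- the paper's three-sentence proof simply does not address this point and tacitly assumes (via the parameter choices of Remark~\ref{remark2}, or by reading it into the hypothesis) that the loop survives to $k_0$.
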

\begin{proof}
As $\vs^k$ is an order-consistent solution of \eqref{eq:ho_rank0}, by definition $\widetilde{K}_k$ gives the correct number of outliers, say $K^\ast$. It actually holds for all $k\geq k_0$, that $\widetilde{K}_k\equiv K^\ast$. From Lemma 1 the claim follows.
\end{proof}

Note that Theorem \ref{thm:alts} does not require $\supp(\Lambda(k-1))$ to correctly identify the outliers, but just stable estimator $\vs^k$ which does not change the order from $\vs^\ast$. In practice, this might not be satisfied easily; but as we shall see in the next section with experiments, Algorithm \ref{alg:IRLS} typically returns stable estimators which slightly deviate in local ranking order.

\section{EXPERIMENTS}\label{sec:experiments}
{\renewcommand\baselinestretch{1.5}\selectfont
\setlength{\belowcaptionskip}{3pt}
\begin{table*}
\caption{\label{tab:presicion_adap} \textbf{\emph{Precision}}s for simulated data via iLTS, 100 times repeat.}
\newsavebox{\tablebox}
\begin{lrbox}{\tablebox}
\scriptsize
\centering
\begin{tabular}{c||p{1.18cm}p{1.18cm}p{1.18cm}p{1.18cm}p{1.18cm}p{1.18cm}p{1.18cm}p{1.18cm}p{1.2cm}cccccccccc}
 \hline  \textbf{\emph {Precision} (sd)}   &\textbf{OP=5\%}  &\textbf{OP=10\%} &\textbf{OP=15\%} &\textbf{OP=20\%} &\textbf{OP=25\%} &\textbf{OP=30\%} &\textbf{OP=35\%} &\textbf{OP=40\%} &\textbf{OP=45\%} &\textbf{OP=50\%}\\
 \hline
 \hline  \textbf{SN=1000}   &0.997(0.022) &0.993(0.023) &0.993(0.015) &0.978(0.025) &0.964(0.034) &0.942(0.037) &0.893(0.053) &0.825(0.064) &0.670(0.078)  &\red{0.505(0.097)} \\
 \hline  \textbf{SN=2000}   &1.000(0) &1.000(0) &0.998(0.009) &0.999(0.005) &0.995(0.010) &0.976(0.023) &0.947(0.034)  &0.882(0.051) &0.751(0.067) &\red{0.503(0.089)}   \\
  \hline  \textbf{SN=3000}  &1.000(0) &1.000(0) &1.000(0) &0.999(0.002) &0.998(0.005) &0.991(0.013) &0.970(0.024)  &0.926(0.036) &0.811(0.060) &\red{0.502(0.090)}  \\
   \hline  \textbf{SN=4000}   &1.000(0) &1.000(0) &1.000(0) &1.000(0) &0.999(0.002) &0.995(0.010) &0.988(0.015)  &0.945(0.031) &0.829(0.059) &\red{0.498(0.098)}  \\
    \hline  \textbf{SN=5000}   &1.000(0) &1.000(0) &1.000(0) &1.000(0) &1.000(0) &0.998(0.006) &0.990(0.015)  &0.959(0.027) &0.847(0.052) &\red{0.499(0.101)}  \\
 \hline
 \end {tabular}
  \end{lrbox}
\scalebox{0.95}{\usebox{\tablebox}}
\end{table*}
\par}
{\renewcommand\baselinestretch{1.5}\selectfont
\setlength{\belowcaptionskip}{3pt}
\begin{table*}
\caption{\label{tab:presicion_lasso} \textbf{\emph{Precision}}s for simulated data via LASSO, 100 times repeat.}

\begin{lrbox}{\tablebox}
\scriptsize
\centering
\begin{tabular}{c||p{1.18cm}p{1.18cm}p{1.18cm}p{1.18cm}p{1.18cm}p{1.18cm}p{1.18cm}p{1.18cm}p{1.2cm}cccccccccc}
 \hline  \textbf{\emph{Precision} (sd)}   &\textbf{OP=5\%}  &\textbf{OP=10\%} &\textbf{OP=15\%} &\textbf{OP=20\%} &\textbf{OP=25\%} &\textbf{OP=30\%} &\textbf{OP=35\%} &\textbf{OP=40\%} &\textbf{OP=45\%} &\textbf{OP=50\%}\\
 \hline
  \hline  \textbf{SN=1000}   &0.972(0.033) &0.962(0.030) &0.958(0.025) &0.931(0.028) &0.923(0.028) &0.905(0.032) &0.863(0.040) &0.805(0.058) &0.698(0.067)  &\red{0.513(0.085)} \\
 \hline  \textbf{SN=2000}   &0.996(0.011) &0.990(0.014) &0.984(0.016) &0.970(0.022) &0.960(0.017) &0.942(0.022) &0.914(0.031)  &0.860(0.044) &0.750(0.056) &\red{0.516(0.084)}   \\
  \hline  \textbf{SN=3000}  &0.999(0.005) &0.997(0.008) &0.992(0.012) &0.981(0.016) &0.970(0.016) &0.957(0.020) &0.929(0.022)  &0.887(0.031) &0.796(0.050) &\red{0.523(0.083)}  \\
   \hline  \textbf{SN=4000}   &0.999(0.001) &0.999(0.005) &0.996(0.009) &0.990(0.011) &0.980(0.015) &0.970(0.016) &0.946(0.019)  &0.909(0.027) &0.818(0.048) &\red{0.518(0.093)}  \\
    \hline  \textbf{SN=5000}   &0.999(0.002) &1.000(0) &0.998(0.006) &0.992(0.011) &0.985(0.015) &0.972(0.016) &0.955(0.019)  &0.917(0.027) &0.837(0.038) &\red{0.525(0.088)}  \\
 \hline
 \end {tabular}
 \end{lrbox}
\scalebox{0.95}{\usebox{\tablebox}}
\end{table*}
\par}
{\renewcommand\baselinestretch{1.5}\selectfont
\setlength{\belowcaptionskip}{3pt}
\begin{table*}
\caption{\label{tab:recall_adap} \textbf{\emph{Recall}}s for simulated data via iLTS, 100 times repeat.}

\begin{lrbox}{\tablebox}
\scriptsize
\centering
\begin{tabular}{c||p{1.18cm}p{1.18cm}p{1.18cm}p{1.18cm}p{1.18cm}p{1.18cm}p{1.18cm}p{1.18cm}p{1.2cm}cccccccccc}
 \hline  \textbf{\emph{Recall} (sd)}   &\textbf{OP=5\%}  &\textbf{OP=10\%} &\textbf{OP=15\%} &\textbf{OP=20\%} &\textbf{OP=25\%} &\textbf{OP=30\%} &\textbf{OP=35\%} &\textbf{OP=40\%} &\textbf{OP=45\%} &\textbf{OP=50\%}\\
 \hline
 \hline  \textbf{SN=1000}   &1.000(0) &0.994(0.015) &0.994(0.010) &0.981(0.020) &0.969(0.024) &0.943(0.036) &0.885(0.054) &0.805(0.066) &0.653(0.080) &\red{0.438(0.093)} \\
 \hline  \textbf{SN=2000}   &1.000(0) &1.000(0) &0.999(0.006) &0.999(0.005) &0.994(0.011) &0.978(0.019) &0.947(0.032)  &0.879(0.052) &0.727(0.071) &\red{0.456(0.087)}   \\
  \hline  \textbf{SN=3000}  &1.000(0) &1.000(0) &1.000(0) &0.999(0.002) &0.998(0.005) &0.991(0.012) &0.970(0.023)  &0.925(0.037) &0.797(0.062) &\red{0.464(0.089)}  \\
   \hline  \textbf{SN=4000}   &1.000(0) &1.000(0) &1.000(0) &1.000(0) &0.999(0.003) &0.996(0.007) &0.988(0.014)  &0.946(0.030) &0.821(0.060) &\red{0.466(0.098)}  \\
    \hline  \textbf{SN=5000}   &1.000(0) &1.000(0) &1.000(0) &1.000(0) &1.000(0)  &0.998(0.006) &0.991(0.013)  &0.962(0.025) &0.842(0.052) &\red{0.470(0.100)}  \\
 \hline
 \end {tabular}
  \end{lrbox}
\scalebox{0.95}{\usebox{\tablebox}}
\end{table*}
\par}

{\renewcommand\baselinestretch{1.5}\selectfont
\setlength{\belowcaptionskip}{3pt}
\begin{table*}
\caption{\label{tab:recall_lasso} \textbf{\emph{Recall}}s for simulated data via LASSO, 100 times repeat.}

\begin{lrbox}{\tablebox}
\scriptsize
\centering
\begin{tabular}{c||p{1.18cm}p{1.18cm}p{1.18cm}p{1.18cm}p{1.18cm}p{1.18cm}p{1.18cm}p{1.18cm}p{1.2cm}cccccccccc}
 \hline  \textbf{\emph{Recall} (sd)}   &\textbf{OP=5\%}  &\textbf{OP=10\%} &\textbf{OP=15\%} &\textbf{OP=20\%} &\textbf{OP=25\%} &\textbf{OP=30\%} &\textbf{OP=35\%} &\textbf{OP=40\%} &\textbf{OP=45\%} &\textbf{OP=50\%}\\
 \hline
 \hline  \textbf{SN=1000}   &0.972(0.033) &0.962(0.030) &0.958(0.025) &0.931(0.028) &0.923(0.028) &0.905(0.032) &0.863(0.040) &0.805(0.058) &0.698(0.067) &\red{0.513(0.085)} \\
 \hline  \textbf{SN=2000}   &0.996(0.011) &0.990(0.014) &0.984(0.016) &0.970(0.022) &0.960(0.017) &0.942(0.022) &0.914(0.031)  &0.860(0.044) &0.750(0.056) &\red{0.518(0.084)}   \\
  \hline  \textbf{SN=3000}  &0.999(0.005) &0.997(0.008) &0.992(0.012) &0.981(0.016) &0.970(0.016) &0.957(0.020) &0.929(0.022)  &0.887(0.031) &0.796(0.050) &\red{0.523(0.083)}  \\
    \hline  \textbf{SN=4000}   &0.999(0.001) &0.999(0.005) &0.996(0.009) &0.990(0.011) &0.980(0.015) &0.970(0.016) &0.946(0.019)  &0.909(0.027) &0.818(0.048) &\red{0.518(0.093)}  \\
    \hline  \textbf{SN=5000}   &0.999(0.002) &1.000(0) &0.998(0.006) &0.992(0.011) &0.985(0.015) &0.972(0.016) &0.955(0.019)  &0.917(0.027) &0.837(0.038) &\red{0.525(0.088)}  \\
 \hline
 \end {tabular}
  \end{lrbox}
\scalebox{0.95}{\usebox{\tablebox}}
\end{table*}
\par}
{\renewcommand\baselinestretch{1.5}\selectfont
\setlength{\belowcaptionskip}{3pt}
\begin{table*}
\caption{\label{tab:f1_adap} \textbf{\emph{F1}} scores for simulated data via iLTS, 100 times repeat.}

\begin{lrbox}{\tablebox}
\scriptsize
\centering
\begin{tabular}{c||p{1.18cm}p{1.18cm}p{1.18cm}p{1.18cm}p{1.18cm}p{1.18cm}p{1.18cm}p{1.18cm}p{1.2cm}cccccccccc}
 \hline  \textbf{\emph{F1} (sd)}   &\textbf{OP=5\%}  &\textbf{OP=10\%} &\textbf{OP=15\%} &\textbf{OP=20\%} &\textbf{OP=25\%} &\textbf{OP=30\%} &\textbf{OP=35\%} &\textbf{OP=40\%} &\textbf{OP=45\%} &\textbf{OP=50\%}\\
 \hline
 \hline  \textbf{SN=1000}   &0.998(0.012) &0.994(0.019) &0.994(0.012) &0.980(0.022) &0.966(0.028) &0.943(0.036) &0.889(0.053) &0.815(0.064) &0.675(0.079)  &\red{0.469(0.095)} \\
 \hline  \textbf{SN=2000}   &1.000(0) &1.000(0) &0.999(0.007) &0.999(0.005) &0.994(0.010) &0.977(0.021) &0.947(0.033)  &0.880(0.051) &0.739(0.069) &\red{0.478(0.088)}   \\
  \hline  \textbf{SN=3000}  &1.000(0) &1.000(0) &1.000(0) &0.999(0.002) &0.998(0.005) &0.991(0.012) &0.970(0.023)  &0.925(0.036) &0.804(0.061) &\red{0.482(0.089)}  \\
   \hline  \textbf{SN=4000}   &1.000(0) &1.000(0) &1.000(0) &1.000(0) &0.999(0.003) &0.996(0.009) &0.988(0.014)  &0.946(0.030) &0.825(0.059) &\red{0.482(0.098)}  \\
    \hline  \textbf{SN=5000}   &1.000(0) &1.000(0) &1.000(0) &1.000(0) &1.000(0) &0.998(0.006) &0.990(0.014)  &0.960(0.026) &0.845(0.052) &\red{0.484(0.101)}  \\
 \hline
 \end {tabular}
  \end{lrbox}
\scalebox{0.95}{\usebox{\tablebox}}
\end{table*}
\par}

{\renewcommand\baselinestretch{1.5}\selectfont
\setlength{\belowcaptionskip}{3pt}
\begin{table*}
\caption{\label{tab:f1_lasso} \textbf{\emph{F1}} scores for simulated data via LASSO, 100 times repeat.}

 \begin{lrbox}{\tablebox}
\scriptsize
\centering
\begin{tabular}{c||p{1.18cm}p{1.18cm}p{1.18cm}p{1.18cm}p{1.18cm}p{1.18cm}p{1.18cm}p{1.18cm}p{1.2cm}cccccccccc}
 \hline  \textbf{\emph{F1} (sd)}   &\textbf{OP=5\%}  &\textbf{OP=10\%} &\textbf{OP=15\%} &\textbf{OP=20\%} &\textbf{OP=25\%} &\textbf{OP=30\%} &\textbf{OP=35\%} &\textbf{OP=40\%} &\textbf{OP=45\%} &\textbf{OP=50\%}\\
 \hline
 \hline  \textbf{SN=1000}   &0.972(0.033) &0.962(0.030) &0.958(0.025) &0.931(0.028) &0.923(0.028) &0.905(0.032) &0.863(0.040) &0.805(0.058) &0.698(0.067)  &\red{0.513(0.085)} \\
 \hline  \textbf{SN=2000}   &0.996(0.011) &0.990(0.014) &0.984(0.016) &0.970(0.022) &0.960(0.017) &0.942(0.022) &0.914(0.031)  &0.860(0.044) &0.750(0.056) &\red{0.516(0.084)}   \\
  \hline  \textbf{SN=3000}  &0.999(0.005) &0.997(0.008) &0.992(0.012) &0.981(0.016) &0.970(0.016) &0.957(0.020) &0.929(0.022)  &0.887(0.031) &0.796(0.050) &\red{0.523(0.083)}  \\
   \hline  \textbf{SN=4000}   &0.999(0.001) &0.999(0.005) &0.996(0.009) &0.990(0.011) &0.980(0.015) &0.970(0.016) &0.946(0.019)  &0.909(0.027) &0.818(0.048) &\red{0.518(0.093)}  \\
    \hline  \textbf{SN=5000}   &0.999(0.002) &1.000(0) &0.998(0.006) &0.992(0.011) &0.985(0.015) &0.972(0.016) &0.955(0.019)  &0.917(0.027) &0.837(0.038) &\red{0.525(0.088)}  \\
 \hline
 \end {tabular}
   \end{lrbox}
 \scalebox{0.95}{\usebox{\tablebox}}
\end{table*}
\par}

{\renewcommand\baselinestretch{1.5}\selectfont
\setlength{\belowcaptionskip}{3pt}
\begin{table*}[!htb]
\caption{\label{tab:comp_adap} Computing time for 100 runs in total on simulated data via iLTS.}
%\scriptsize
\centering
%\begin{tabular}{c||p{1.1cm}p{1.1cm}p{1.1cm}p{1.1cm}p{1.1cm}p{1.1cm}p{1.1cm}p{1.1cm}p{1.1cm}cccccccccc}
\begin{tabular}{c||cccccccccc}
\hline  time (second)  &\textbf{OP=5\%}  &\textbf{OP=10\%} &\textbf{OP=15\%} &\textbf{OP=20\%} &\textbf{OP=25\%} &\textbf{OP=30\%} &\textbf{OP=35\%} &\textbf{OP=40\%} &\textbf{OP=45\%} &\textbf{OP=50\%}\\
 \hline
 \hline  \textbf{SN=1000}   &4.38	&3.92	&3.65	&3.58	&3.54	&3.61	&3.71	&3.75	&3.66	&3.73 \\
 \hline  \textbf{SN=2000}   &6.54	&5.93	&5.62	&5.32	&5.29	&5.33	&5.28	&5.19	&5.13	&5.15   \\
  \hline  \textbf{SN=3000}  &8.86	&8.14	&7.62	&7.31	&7.11	&6.98	&7.05	&7.07	&7.15	&7.02  \\
   \hline  \textbf{SN=4000}   &11.01	&10.32	&9.67	&9.37	&8.67	&8.87	&7.82	&8.51	&8.78	&8.81  \\
    \hline  \textbf{SN=5000}  &13.23	&12.36	&12.14	&11.73	&12.04	&11.59	&11.03	&10.82	&10.79	&10.49  \\
 \hline
 \end {tabular}
\end{table*}
\par}

{\renewcommand\baselinestretch{1.5}\selectfont
\setlength{\belowcaptionskip}{3pt}
\begin{table*}[!htb]
\caption{\label{tab:comp_lasso} Computing time for 100 runs in total on simulated data via LASSO.}
%\scriptsize
\centering
%\begin{tabular}{c||p{1.18cm}p{1.18cm}p{1.18cm}p{1.18cm}p{1.18cm}p{1.18cm}p{1.18cm}p{1.18cm}p{1.2cm}cccccccccc}
\begin{tabular}{c||cccccccccc}
 \hline  time (second)   &\textbf{OP=5\%}  &\textbf{OP=10\%} &\textbf{OP=15\%} &\textbf{OP=20\%} &\textbf{OP=25\%} &\textbf{OP=30\%} &\textbf{OP=35\%} &\textbf{OP=40\%} &\textbf{OP=45\%} &\textbf{OP=50\%}\\
 \hline
 \hline  \textbf{SN=1000}   &625.14  &673.75	&690.31	&625.85	&595.65	&636.35	&592.65	&595.15	&638.65	&560.71 \\
 \hline  \textbf{SN=2000}   &905.04	&973.64	&938.37	&1017.06	&791.37	&887.72	&855.61	&825.98	&818.99	&806.25   \\
  \hline  \textbf{SN=3000}  &1116.23	&1167.45	&1184.89	&1127.83	&1118.26	&1032.88	&916.89	&952.67	&822.35	&929.75 \\
   \hline  \textbf{SN=4000}  &1158.67	&1256.82	&1305.28	&1227.76	&1161.78	&1087.81	&1016.97	&1035.82	&948.75	 &1011.45  \\
    \hline  \textbf{SN=5000}  &1288.02	&1375.14	&1368.75	&1256.89	&1228.56	&1104.32	&992.46	&976.06	&1034.12	&1077.93  \\
 \hline
 \end {tabular}
\end{table*}
\par}

A key question in the outlier detection community is how to evaluate the effectiveness of outlier detection algorithms when the ground-truth outliers are not available. In this section, we show the effectiveness of the proposed method on simulated data with known ground truth outliers and on real-world datasets without ground truth outliers. The codes for the numerical experiments and the real-world datasets can be downloaded from \href{https://code.google.com/p/irls/}{https://code.google.com/p/irls/}.

\subsection{Simulated data}
The simulated data is constructed as follows. A random total order on $n$ candidates is created as the ground-truth order. Then we add paired comparison edges $(i,j)$ randomly with preference directions following the ground-truth order. We simulate the outliers by randomly choosing a portion of the comparison edges and reversing them in preference direction. A paired comparison graph with outliers, possibly incomplete and imbalanced, is constructed.

Here we choose $n=16$, which is consistent with the real-world datasets, and make the following definitions for the experimental parameters. The total number of paired comparisons occurred on this graph is {\bf SN} (Sample Number), and the number of outliers is {\bf ON} (Outlier Number). Then the outlier percentage {\bf OP} can be obtained as {\bf ON}/{\bf SN}.

Most outlier detection algorithms adopt a tuning parameter (\emph{t}) in order to select different amount of data samples as outliers~\cite{MM13} and the number of outliers detected changes as $t$ changes. If \emph{t} is picked too restrictively, then the algorithm will miss true outlier points (false negatives). On the other hand, if the algorithm declares too many data samples as outliers, then it will lead to too many false positives. This tradeoff can be measured in terms of \textbf{\emph{precision}} and \textbf{\emph{recall}}, which are commonly used for measuring the effectiveness of outlier detection methods. Specifically, the \textbf{\emph{precision}} is defined as the percentage of reported outliers, which truly turn out to be outliers; and the \textbf{\emph{recall}} is correspondingly defined as the percentage of ground-truth outliers, which have been reported as outliers.

The proposed method iLTS (Algorithm~\ref{alg:IRLS}) is compared with LASSO~\cite{MM13} for outlier detection on the simulated data. For the ease of comparison, here we should tell LASSO in advance the exact percentage of outliers exist in the dataset.

\begin{figure*}
\setlength{\abovecaptionskip}{5pt}
 \begin{center}
 \subfigure [ ]{
\includegraphics[width=0.18\textwidth]{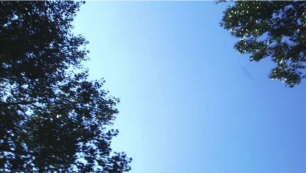}}
 \subfigure[ ]{
\includegraphics[width=0.18\textwidth]{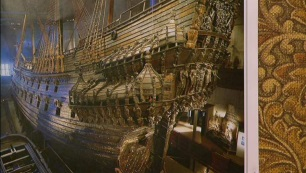}}
 \subfigure[ ]{
\includegraphics[width=0.18\textwidth]{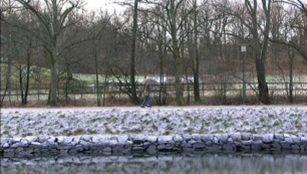}}
 \subfigure[ ]{
\includegraphics[width=0.18\textwidth]{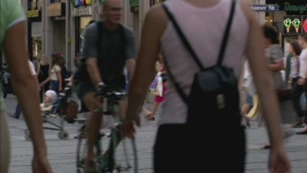}}
 \subfigure[ ]{
\includegraphics[width=0.18\textwidth]{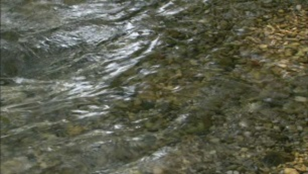}}
 \subfigure[ ]{
\includegraphics[width=0.18\textwidth]{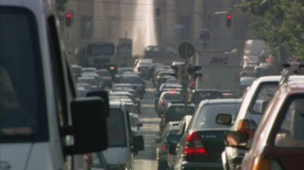}}
 \subfigure[ ]{
\includegraphics[width=0.18\textwidth]{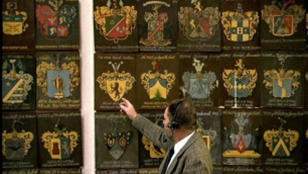}}
 \subfigure[ ]{
\includegraphics[width=0.18\textwidth]{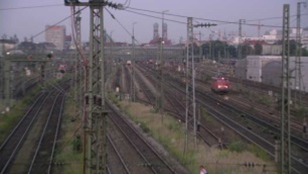}}
 \subfigure[ ]{
\includegraphics[width=0.18\textwidth]{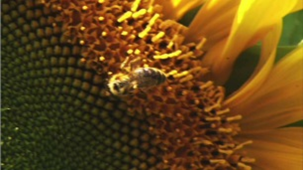}}
 \subfigure[ ]{
\includegraphics[width=0.18\textwidth]{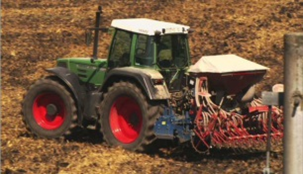}}

   \caption{Reference videos in LIVE database.} \label{datavideos}
\end{center}
\end{figure*}

The \textbf{\emph{precision}}s, \textbf{\emph{recall}}s and \textbf{\emph{F1-score}}s with standard deviations ({\bf sd}) over 100 runs for these two methods on different choices of {\bf SN} and {\bf ON} are shown in Tables~\ref{tab:presicion_adap},~\ref{tab:presicion_lasso},~\ref{tab:recall_adap},~\ref{tab:recall_lasso},~\ref{tab:f1_adap}, and~\ref{tab:f1_lasso}. \textbf{\emph{F1-score}} is a combined measure that assesses the \textbf{\emph{precision}}/\textbf{\emph{recall}} tradeoff, which reaches its best value at 1 and worst score at 0. It can be defined as follows:
\begin{align}
\textbf{\emph{F1}}  =  2\cdot\frac{\textbf{\emph{precision}}\cdot\textbf{\emph{recall}}}{\textbf{\emph{precision}}+\textbf{\emph{recall}}}.
\end{align}
When the number of outliers is not too large (i.e., {\bf OP} $\leq$ 40 \%), iLTS could produce better performance (indicated by higher \textbf{\emph{precision}}s, \textbf{\emph{recall}}s, and \textbf{\emph{F1-score}}s) than LASSO. When {\bf OP} = 50\%, i.e., half of the edges are reverted by outliers, both of these two methods show a rapid decrease of \textbf{\emph{precision}}, \textbf{\emph{recall}}, and \textbf{\emph{F1}} to about 0.5, which is the performance of random guess. It is impossible to distinguish the true signal from noise by any method when more than half of the edges are perturbed, thus a phase transition can be observed in the tables. The worse performance of iLTS for high {\bf OP}s is because the number of outliers estimated by iLTS is smaller than the exact number of outliers when the percentage of outliers is too high, which is further confirmed by the \textbf{\emph{precision}}s and \textbf{\emph{recall}}s for {\bf OP} = 50\%. When {\bf OP} = 50\%, the \textbf{\emph{recall}}s are less than 0.5 (i.e., there are more false negatives than true positives), and \textbf{\emph{precision}}s are greater than 0.5 (i.e., there are more true positives than false positives). Therefore, the number of true positives and false positives (the estimated number of outliers) is smaller than the number of true positives and false negatives (the exact number of outliers).

In addition, we compare the computing time required for these two methods to finish all the 100 runs in Tables~\ref{tab:comp_adap} and~\ref{tab:comp_lasso}. All computation is done using MATLAB R2010a on a Lenovo laptop running Windows 7 with 2.40 GHz Intel Core i7-3630QM and 8 GB 1600 MHz DDR3 memory. It is easy to see that on the simulated dataset, iLTS can achieve up to about 190 times faster than LASSO. Take (SN = 1000, OP = 15\%) as an example, LASSO needs 690.31s for 100 runs, while iLTS only needs 3.65s, which is almost 190 times faster than LASSO. Note that, in most cases iLTS could automatically determine the number of outliers exist in the dataset without \emph{a priori} knowledge.

\subsection{Real-world Data}\label{sec:VQA}

Two real-world datasets are adopted in this subsection. The first dataset PC-VQA, which is collected by~\cite{MM11}, contains 38,400 paired comparisons of the LIVE dataset~\cite{LIVE} (Figure~\ref{datavideos}) from 209 random observers. An attractive property of this dataset is that the paired comparison data is complete and balanced. As LIVE includes 10 different reference videos and 15 distorted versions of each reference video (obtained using four different distortion processes: MPEG-2 compression, H.264 compression, lossy transmission of H.264 compressed bitstreams through simulated IP networks, and lossy transmission of H.264 compressed bitstreams through simulated wireless networks), for a total of 160 videos. A round of complete comparisons for this video database requires $10 \times { 16 \choose 2 }  =1200 $ comparisons, and 38,400 comparisons correspond to 32 complete rounds.

There is no ground-truth for outliers in these real-world datasets, and we can not compute~\textbf{\emph{precision}} and~\textbf{\emph{recall}} as in the simulated data to evaluate the performance of outlier detection methods. Therefore, we inspect the outliers returned and compare them with the whole data to see if they are reasonably good outliers.

{\renewcommand\baselinestretch{1.5}\selectfont
\setlength{\belowcaptionskip}{13pt}
\begin{table*}
\caption{\label{matrixs1} Paired comparison matrices of reference (a) in PC-VQA dataset. Red numbers are outliers obtained by both iLTS and LASSO. Open blue circles are those obtained by LASSO but not iLTS, while filled blue circles are obtained by iLTS but not LASSO.}
%\scriptsize
\centering
%\subtable[\small Adaptive]{
\begin{tabular}{|c|c|c|c|c|c|c|c|c|c|c|c|c|c|c|c|c|}
\hline Video ID    &\textbf{1} &\textbf{9} &\textbf{10} &\textbf{13} &\textbf{7} &\textbf{8} &\textbf{11} &\textbf{14} &\textbf{15} &\textbf{3} &\textbf{12} &\textbf{4} &\textbf{16} &\textbf{5} &\textbf{6} &\textbf{2} \\
\hline \textbf{1}    & 0   &  22    & 29    & 30    & 30    & 29    & 29    & 29    & 30    & 28    & 29    & 32    & 32    & 31   &  32   &  31 \\
\hline \textbf{9}     &\red{10}    &  0   & 22    & 20    & 14    & 23    & 23    & 25    & 29    & 29    & 32    & 30    & 29    & 30    & 29    & 31 \\
\hline \textbf{10}    &\red{3}    &\red{10}     & 0    & 22    & 11    & 21   &  29    & 23    & 31    & 27    & 31   &  30   &  32    & 30    & 32    & 31 \\
\hline \textbf{13}    &\red{2}     &\red{12}    &\red{10}     & 0    & 18   &  22   &  23    & 27   &  31   &  28    & 29   &  29    & 29    & 25    & 27    & 28 \\
\hline \textbf{7}    &\red{2}    &\red{18}   &\red{21}   &\red{14}    &  0   &  21    & 14    & 16   &  28    & 23    & 31    & 25   &  19   &  27    & 26    & 28 \\
\hline \textbf{8}   &\red{3}     &\red{9}   &\red{11}   &\red{10}    &\red{11}     & 0    & 25   &  14    & 28   &  25    & 29   &  27    & 24   &  25    & 28    & 32  \\
\hline \textbf{11}    &\red{3}     &\red{9}      &\red{3}     &\red{9}    &\red{18}     &\red{7}     & 0    & 22    & 27    & 26    & 26    & 30   &  30   &  27   &  27    & 31  \\
\hline \textbf{14}   &\red{3}     &\red{7}       &\red{9}      &\red{5}    &\red{16}   &\red{18}    &\red{10}    &  0    & 28    & 27    & 18   &  29   &  29   &  26    & 28   &  29 \\
\hline \textbf{15}    &\red{2}     &\red{3}     &\red{1}    &\red{1}     &\red{4}      &\red{4}      &\red{5}     &\red{4}     &  0    & 25   &  20    & 22    & 26    & 25   &  29    & 24 \\
\hline \textbf{3}    &\red{4}     &\red{3}    &\red{5}    &\red{4}     &\red{9}    &\red{7}      &\red{6}    &\red{5}      &\red{7}    &  0    &\red{11}   &\textcolor{white}{\pgftextcircledblk{\bf 15}}    & 26    & 24    & 29    & 28  \\
\hline \textbf{12}   &\red{3}    &  0    &\red{1}    &\red{3}     &\red{1}     &\red{3}     &\red{6}    &\red{14}     &\red{12}   &  21     & 0    & 16    & 20    & 24    & 26    & 26 \\
\hline \textbf{4}   & 0     &\red{2}     &\red{2}     &\red{3}     &\red{7}     &\red{5}     &\red{2}     &\red{3}     &\red{10}    & \textcolor{blue}{\pgftextcircled{\bf 17}}    &\red{16}    &  0   &  15    & 26    & 27    & 30 \\
\hline \textbf{16}   & 0     &\red{3}     & 0     &\red{3}    &\red{13}     &\red{8}     &\red{2}    &\red{3}     &\red{6}     &\red{6}   &\red{12}    &\red{17}     & 0   &  22    & 24    & 28 \\
\hline \textbf{5}  &\red{1}    &\red{2}    &\red{2}     &\red{7}    &\red{5}     &\red{7}     &\red{5}   &\red{6}    &\red{7}     &\red{8}    &\red{8}    &\red{6}    &\red{10}    & 0   &  26    & 27  \\
\hline \textbf{6}   & 0     &\red{3}    &  0     &\red{5}     &\red{6}     &\red{4}     &\red{5}   &\red{4}    &\red{3}     &\red{3}     &\red{6}      &\red{5}    &\red{8}   &\red{6}    &  0   &  21  \\
\hline \textbf{2}    &\red{1}    &\red{1}    &\red{1}    &\red{4}     &\red{4}     & 0     &\red{1}     &\red{3}     &\red{8}     &\red{4}    &\red{6}      &\red{2}      &\red{4}     &\red{5}    &\red{11}    & 0 \\
\hline
\end {tabular}
\end{table*}
 \par}

{\renewcommand\baselinestretch{1.5}\selectfont
\setlength{\belowcaptionskip}{13pt}
\begin{table*}\caption{\label{matrixs2} Paired comparison matrices of reference (c) in PC-IQA dataset. Red numbers, open blue circles, and filled blue circles carry the same meanings with Table~\ref{matrixs1}.}
%\scriptsize
\centering
%\subtable[\small Adaptive]{
\begin{tabular}{|c|c|c|c|c|c|c|c|c|c|c|c|c|c|c|c|c|}
\hline \textbf{Image ID}      &\textbf{1} &\textbf{8} &\textbf{16} &\textbf{2} &\textbf{3} &\textbf{11} &\textbf{6} &\textbf{12} &\textbf{9} &\textbf{14} &\textbf{5} &\textbf{13} &\textbf{7} &\textbf{10} &\textbf{15 }&\textbf{4} \\
\hline \textbf{1}    &0	&13	&9	&16	&19	&12	&15	&13	&14	&14	&14	&17	&16	&17	&16	&16 \\
\hline \textbf{8}     &\red{6}	&0	&8	&7	&8	&5	&13	&7	&7	&8	&19	&8	&15	&9	&12	&15 \\
\hline \textbf{16}    &\red{4}	&0	&0	&9	&11	&9	&8	&15	&3	&18	&16	&17	&12	&7	&21	&18\\
\hline \textbf{2}    &\red{5}	&\red{5}	&\red{6}	&0	&8	&9	&10	&11	&7	&14	 &13	&14	&14	&13	&14	 &15 \\
\hline \textbf{3}     &\red{3}	&\red{4}	&\red{6}	&\red{7}	&0	&6	 &11	&9	&10	&16	&12	 &15	&14	&14	&18	&13 \\
\hline \textbf{11}    &\red{4}	&\red{6}	&\red{3}	&\red{5}	 &\red{6}	&0	 &\textcolor{white}{\pgftextcircledblk{\bf 5}}	&3	 &5	&6	&21	&5	 &11	 &7	 &12	 &18\\
\hline \textbf{6}   &0	&\red{2}	&\red{7}	&\red{4}	&\red{2}	&\textcolor{blue}{\pgftextcircled{\bf 7}}	&0	 &12	&12	&7	 &22	&15	 &17	&13	&13	&17   \\
\hline \textbf{12}   &\red{3}	&\red{4}	&\red{1}	&\red{4}	 &\red{4}	 &\red{3}	&\red{1}	&0	 &8	 &15	 &18	&12	&9	 &8	&13	&17 \\
\hline \textbf{9}   &\red{1}	&\red{3}	&\red{3}	&\red{5}	 &\red{1}	 &\red{3}	&\red{1}	&0	 &0	 &5	 &18	&10	&14	&9	 &7	&16  \\
\hline \textbf{14}   &0	&0	&\red{1}	&0	&0	&\red{3}	&\red{7}	 &\red{2}	 &\red{1}	&0	&14	&15	&10	&8	&17	&19 \\
\hline \textbf{5}    &0	&0	&0	&0	&0	&0	&0	&0	&0	&\red{1}	&0	&14	&19	&19	&15	&17 \\
\hline \textbf{13}   &0	&0	&0	&0	&0	&0	&0	&0	&0	&0	&\red{6}	&0	&5	&7	&17	&16  \\
\hline \textbf{7}    &0	&0	&0	&0	&0	&0	&0	&0	&0	&0	&0	&\red{5}	&0	&8	&9	&18 \\
\hline \textbf{10}    &0	&0	&0	&0	&0	&0	&0	&0	&0	&0	&0	&\red{2}	&\red{2}	&0	 &\textcolor{white}{\pgftextcircledblk{\bf 3}}	 &11  \\
\hline \textbf{15}   &0	&0	&0	&0	&0	&0	&0	&0	&0	&0	&0	&0	&0	&\textcolor{blue}{\pgftextcircled{\bf 5}}	&0	&11 \\
\hline \textbf{4}  &0	&0	&0	&0	&0	&0	&0	&0	&0	&0	&0	&\red{1}	&0	&\red{6}	 &\red{6}	&0\\\hline
\end {tabular}
\end{table*}
 \par}

%\begin{figure*}
% \begin{center}
%\includegraphics[width=\linewidth]{ref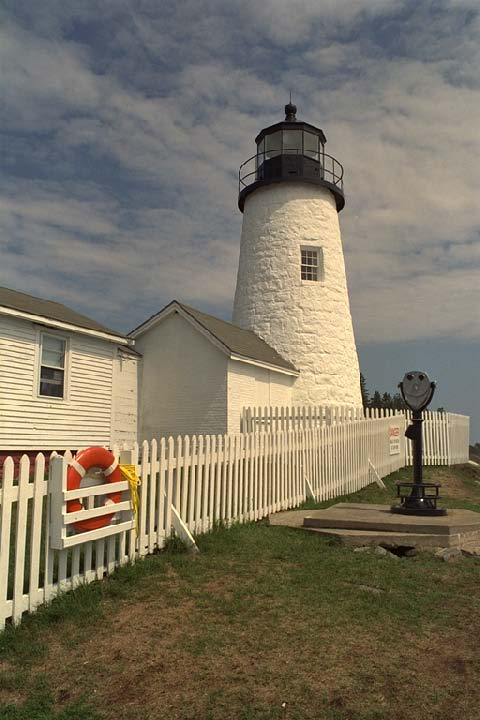}
%  \caption{Top 18.65\% outliers for reference (a) in PC-VQA dataset. The integer on each curve represents $a_{ij}$ defined in subsection~\ref{sec:VQA}. The detection results of Adaptive and LASSO is different on the pair with red curve. } \label{data5}
%\end{center}
%\end{figure*}
%%%%%%%%%%%%%%%%%%%%%%%%%%%%%%%%%%%%%%%%%%%%%%%%%%%%%%5
We take reference (a) in the PC-VQA dataset as an illustrative example (other reference videos exhibit similar results). We compare iLTS and LASSO again in the real-world datasets. The number of outliers estimated by iLTS is used for LASSO to choose regularization parameters and select the outliers. Outliers detected by both methods are shown in the paired comparison matrix in Table~\ref{matrixs1}. The paired comparison matrix is constructed as follows (Table~\ref{matrixs2} is constructed in the same way). For each video pair $\{i,j\}$, let $n_{ij}$ be the number of comparisons, among which $a_{ij}$ raters agree that the quality of $i$ is better than $j$ ($a_{ji}$ carries the opposite meaning). So $a_{ij} + a_{ji} = n_{ij}$ if no tie occurs, and in the PC-VQA dataset, $n_{ij}\equiv32$ for all videos. The order of the video ID in this Table is arranged from high to low according to the global ranking score calculated by the least squares method (\ref{eq:ho_rank0}). The outliers picked out by both methods are mainly distributed in the lower left corner of this matrix, which implies that the outliers are those preference orders with a large deviation from the global ranking scores by L2. The total number of outliers estimated by iLTS from this reference video is 761, so the outlier percentage (\textbf{OP}) = 761/3840 = 18.65\%. For comparison, we also inspect the top 18.65\% returned by LASSO. It is easy to see that outliers returned by iLTS and LASSO are almost the same except one pair (ID = 3 and ID = 4). In the dataset, 15 raters agree that the quality of ID = 3 is better than that of ID = 4, while 17 raters have the opposite opinion. iLTS treats $a_{3,4} = 15$ as outliers, while LASSO chooses the opposite direction (i.e., treats $a_{4,3} = 17$ as outliers). LASSO tends to choose outliers as the large deviation from the gradients of global ranking scores while iLTS prefers to choose the minority in a paired comparison data. Such a small difference only leads to a local order change of nearby ranked items, ID = 3 and ID = 4. Therefore the ranking algorithms are stable.

The global ranking scores of these three algorithms, namely L2, LASSO, and iLTS, are shown in Table~\ref{tab:data5-rank}. Removing the top 18.65\% outliers in both LASSO and iLTS changes the orders of some competitive videos. Both LASSO and iLTS think ID = 12 has better performance than ID = 3 and ID = 4. The scores of LASSO and iLTS are quite similar except that the orders and scores of ID = 3 and ID = 4 are exchanged, because LASSO and iLTS choose different preference directions as outliers.

\begin{figure*}
\begin{center}
\subfigure [ ]{
\includegraphics[width=0.15\textwidth]{1.png}}
\subfigure [ ]{
\includegraphics[width=0.15\textwidth]{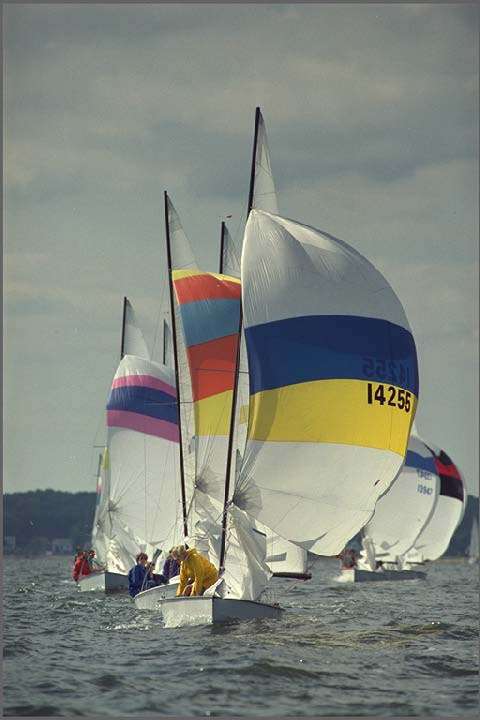}}
\subfigure [ ]{
\includegraphics[width=0.15\textwidth]{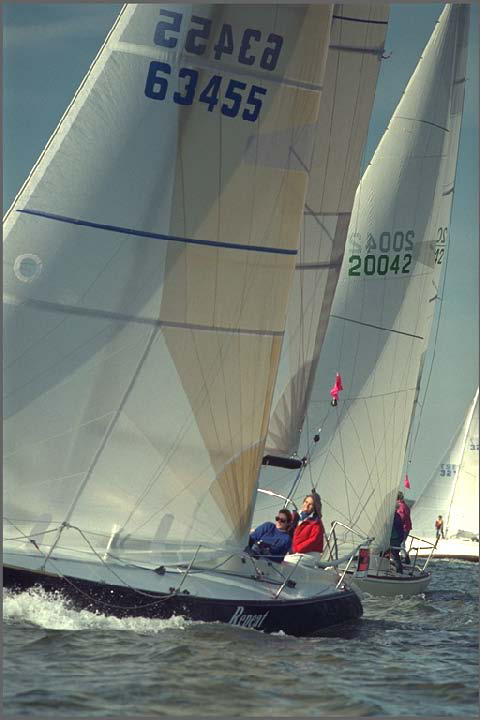}}
\subfigure [ ]{
\includegraphics[width=0.15\textwidth]{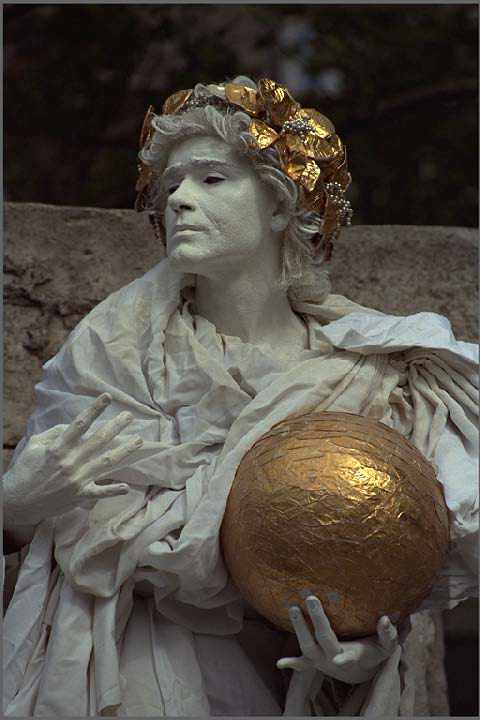}}
\subfigure [ ]{
\includegraphics[width=0.15\textwidth]{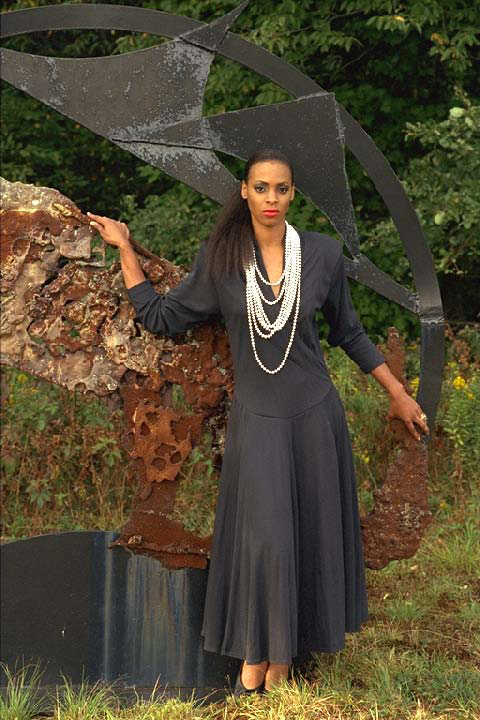}}
\subfigure [ ]{
\includegraphics[width=0.15\textwidth]{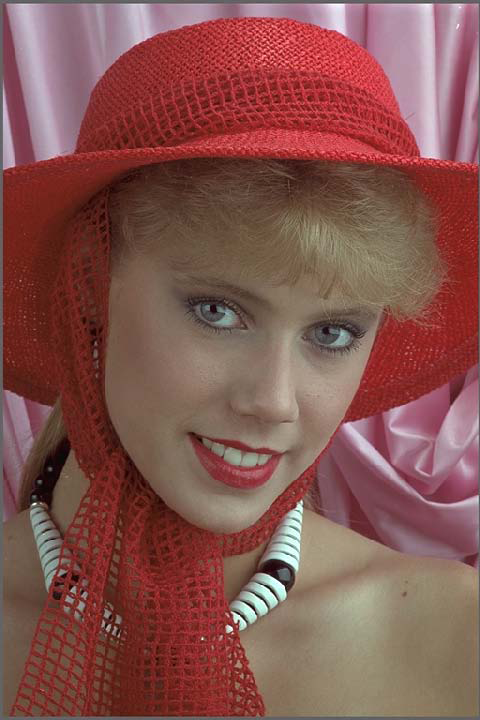}}
\subfigure [ ]{
\includegraphics[width=0.15\textwidth]{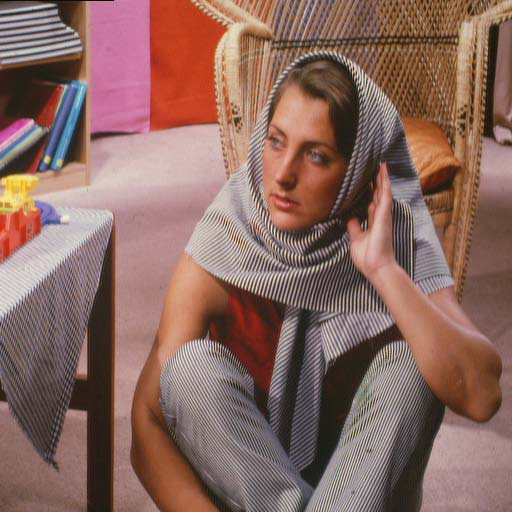}}
\subfigure [ ]{
\includegraphics[width=0.15\textwidth]{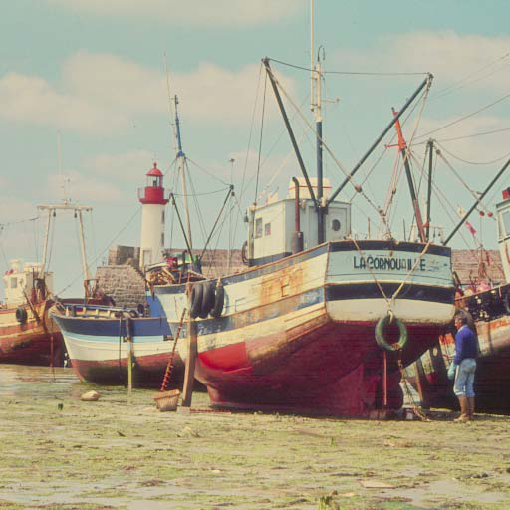}}
\subfigure [ ]{
\includegraphics[width=0.15\textwidth]{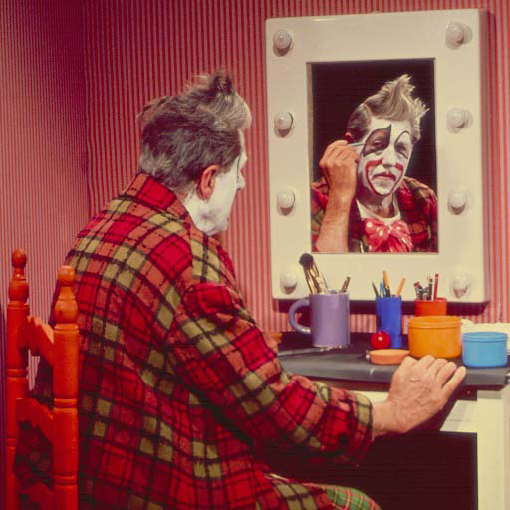}}
\subfigure [ ]{
\includegraphics[width=0.15\textwidth]{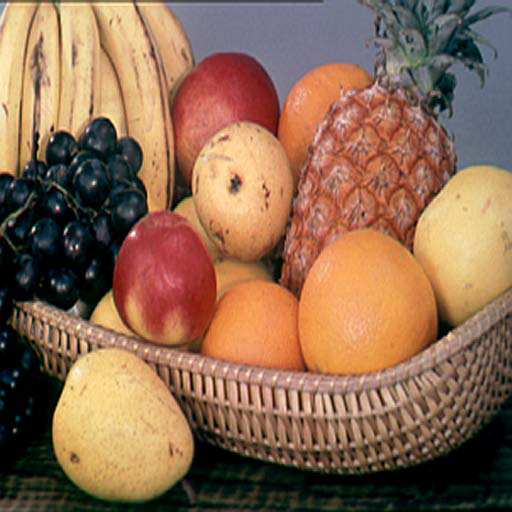}}
\subfigure [ ]{
\includegraphics[width=0.15\textwidth]{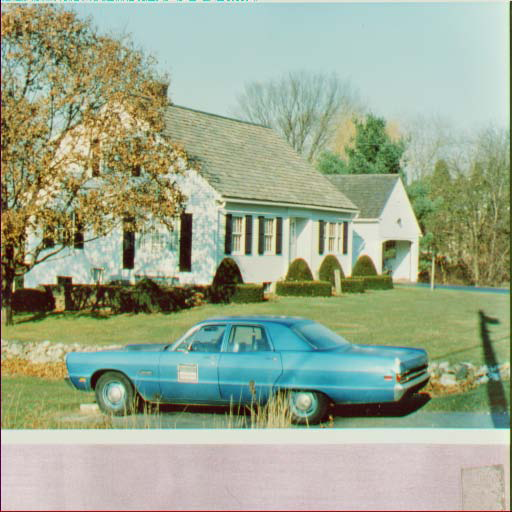}}
\subfigure [ ]{
\includegraphics[width=0.15\textwidth]{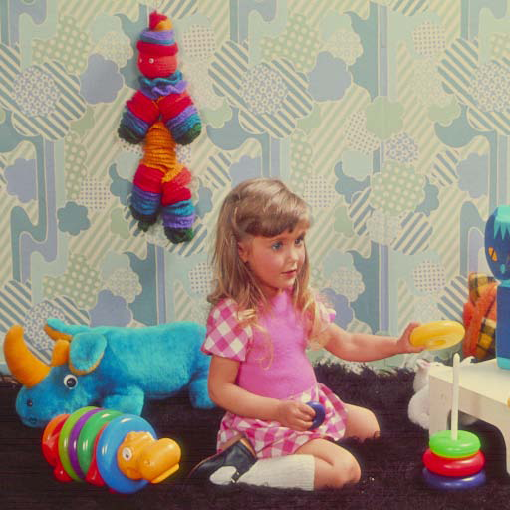}}
\subfigure [ ]{
\includegraphics[width=0.15\textwidth]{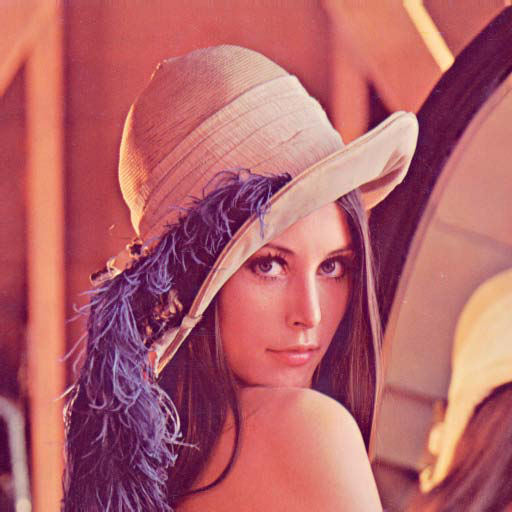}}
\subfigure [ ]{
\includegraphics[width=0.15\textwidth]{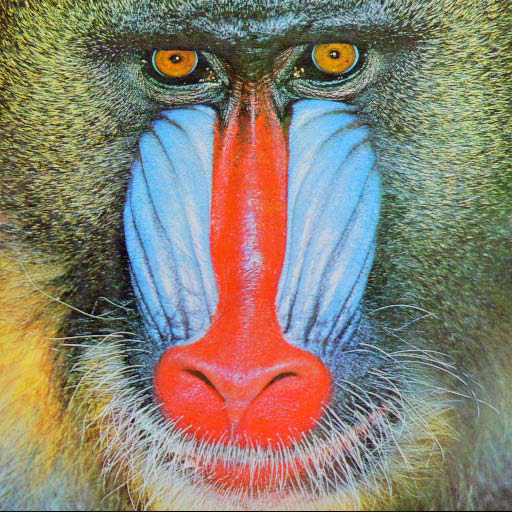}}
\subfigure [ ]{
\includegraphics[width=0.15\textwidth]{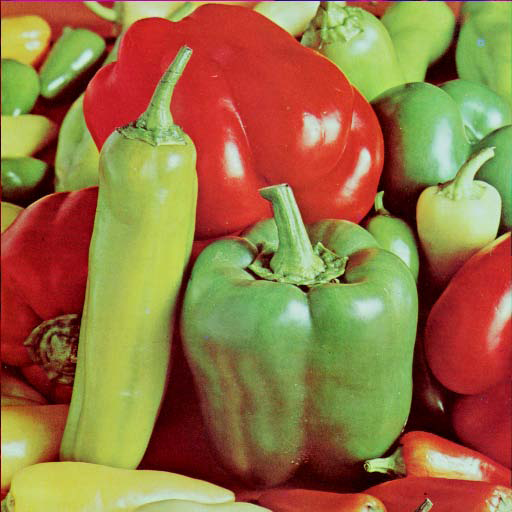}}

\caption{Reference images in the LIVE and IVC datasets. (The first six images are from the LIVE dataset and the remaining nine images are from the IVC dataset.)} \label{dataimages}
\end{center}
\vskip -0.2in
\end{figure*}

{\renewcommand\baselinestretch{1.3}\selectfont

\begin{table}[ht] \caption{\label{matrixs} Comparison of different rankings. Three ranking methods are compared with the integer representing the ranking position and the number in parentheses representing the global ranking score returned by the corresponding algorithm.}

\centering

\subtable[  Reference (a) in the PC-VQA dataset]{
      \begin{tabular}{|c|c|c|c|}
    \hline
    \textbf{Video ID} & \textbf{L2} & \textbf{LASSO} & \textbf{iLTS}\\
    \hline
 1 & 1 ( 0.7930 ) & 1 ( 0.9123 ) & 1 ( 0.9129 ) \\
9  & 2 ( 0.5312 ) & 2 ( 0.7537 ) & 2 ( 0.7539 ) \\
10 & 3 ( 0.4805 ) & 3 ( 0.6317 ) & 3 ( 0.6322 )  \\
13 & 4 ( 0.3906 ) & 4 ( 0.5522 ) & 4 ( 0.5524 )   \\
7 & 5 ( 0.2852 ) & 5 ( 0.4533 ) & 5 ( 0.4537 )   \\
8 & 6 ( 0.2383 )  & 6 ( 0.3159 )  & 6 ( 0.3163 )  \\
11 & 7 ( 0.2148 ) &  7 ( 0.2113 ) & 7 ( 0.2120 )   \\
14 & 8 (  0.1641 ) & 8  ( 0.1099 ) & 8 ( 0.1103 )  \\
15 & 9 ( -0.1758 ) & 9 ( -0.1024 ) & 9 ( -0.1029 )   \\
3 & 10 ( -0.2227 ) & \red{11 ( -0.3195 )} & \red{12 ( -0.3999 )}  \\
12 & 11 ( -0.2500 ) & \red{10 ( -0.2149 )} & \red{10 ( -0.2158 )}   \\
4 & 12 ( -0.2930 ) & \red{12 ( -0.4054 )} & \red{11 ( -0.3252 )}   \\
16 & 13 ( -0.3633 ) & 13 ( -0.5311 ) & 13 ( -0.5332 )   \\
5 & 14 ( -0.4414 ) & 14 ( -0.6573 ) & 14 ( -0.6568 )   \\
6 & 15 ( -0.6289 ) & 15 ( -0.8054 ) & 15 ( -0.8057 )   \\
2 & 16 ( -0.7227 ) & 16 ( -0.9046 ) & 16 ( -0.9042 )   \\

    \hline

\end{tabular}
       \label{tab:data5-rank}
}

\subtable[  Reference (c) in the PC-IQA dataset]{
       \begin{tabular}{|c|c|c|c|}
    \hline
    \textbf{Image ID} & \textbf{L2} & \textbf{LASSO} & \textbf{iLTS}\\
    \hline
 1 & 1 ( 0.7575 ) & 1 ( 0.9015 ) & 1 ( 0.9022 ) \\
8  & 2 ( 0.5670 ) & 2 ( 0.7088 ) & 2 ( 0.7129 ) \\
16 & 3 ( 0.5124 ) & 3 ( 0.6472 ) & 3 ( 0.6504 )  \\
2 & 4 ( 0.4642 ) & 4 ( 0.5242 ) & 4 ( 0.5248 )   \\
3 & 5 ( 0.4423 ) & 5 ( 0.4119 ) & 5 ( 0.4148 )   \\
11 & 6 ( 0.3277 ) &  6 ( 0.2592 ) &  \red{ 7 ( 0.1763 )}  \\
6 & 7 ( 0.3128 ) & 7 ( 0.2515 ) &  \red{ 6 ( 0.3124 )}   \\
12 & 8 ( 0.2423 ) & 8 ( 0.1209 ) & 8 ( 0.1261 )  \\
9 & 9 ( 0.1453 ) & 9 ( 0.0043 ) & 9 ( 0.0069 )   \\
14 & 10 ( -0.0455 ) & 10 ( -0.1274 ) & 10 ( -0.1243 )   \\
5 & 11 ( -0.3376 ) & 11 ( -0.3205 ) & 11 ( -0.3214 )   \\
13 & 12 ( -0.4785 ) & 12 ( -0.4621 ) & 12 ( -0.4560 )   \\
7 & 13 ( -0.5396 ) &  13 ( -0.5515 ) & 13 ( -0.5494 )  \\
10 & 14 ( -0.7486 ) &  14 ( -0.7005 ) & \red{ 15 ( -0.7485 )}   \\
15 & 15 ( -0.7658 ) & 15 ( -0.7511 ) & \red{ 14 ( -0.7106 )}    \\
4 & 16 ( -0.8559 ) & 16 ( -0.9163 ) & 16 ( -0.9166 )   \\

    \hline

\end{tabular}
       \label{tab:data10-rank}
}
\end{table}
\par}

The effectiveness of iLTS is demonstrated on a complete and balanced dataset, and we want to show the effectiveness of iLTS on incomplete and imbalanced datasets. The PC-IQA dataset is taken into consideration. This dataset contains 15 reference images and 15 distorted versions of each reference image, for a total of 240 images, which come from two publicly available datasets: LIVE~\cite{LIVE} and IVC~\cite{IVC} (Figure~\ref{dataimages}). The distorted images in the LIVE dataset~\cite{LIVE} are obtained using five different distortion processes: JPEG2000, JPEG, White Noise, Gaussian Blur, and Fast Fading Rayleigh, while the distorted images in the IVC dataset~\cite{IVC} are derived from four distortion types --- JPEG2000, JPEG, LAR Coding, and Blurring. Totally, 186 observers, each of whom performs a varied number of comparisons via Internet, provide 23,097 paired comparisons for subjective IQA.

Table~\ref{matrixs2} shows the comparable experimental results of iLTS vs. LASSO on a randomly selected reference image (image (c) in Figure~\ref{dataimages}). Similar observations as above can be made and we note that outliers distributed on this dataset are much sparser than PC-VQA, shown by many zeros in the lower left corner of the paired comparison matrix. Outlier percentage (OP) returned by iLTS is 173/1655 = 10.45\% in Table~\ref{matrixs2}, and it is easy to find that the detection results of LASSO vs. iLTS are different on two pairs: 1) ID = 6 and ID = 11; 2) ID = 10 and ID = 15. Similar to the last experiment, iLTS prefers to choose the minority in paired comparisons, i.e., the 5 in $11\succ 6$ and the 3 in $10\succ 15$, while LASSO selects outliers as the large deviation from the gradients of global ranking scores even when the votings are in majority. Such a difference leads to a local order change of involved items which are adjacent in ranking list, exhibiting stability in global rankings, as shown in Table~\ref{tab:data10-rank}.

\subsection{Discussion}
As we have seen in the numerical experiments, iLTS and LASSO mostly find the same outliers and when they disagree, iLTS tends to choose the minority and LASSO prefers to choose outliers as the large deviation from the gradients of global ranking
scores even when the votings are in majority. When outliers consist of minority voting as in simulated experiments, iLTS may perform better. Besides, iLTS tents to choose fewer outliers to make sure that there are no outliers in the remaining comparisons. This can also be explained from the algorithm. We choose a small initial estimation for the number of outliers, and increase this estimation until there is no outliers in the remaining comparisons. The parameter $\beta_2>1$ is chosen to be small so we will not overestimate the number of outliers too much.

%
%However, there are still some cases where the number is overestimated, and the overestimated number mainly comes from competitive pairs. Therefore, we can reduce $\beta_2$, which increases the number of iterations and the computing time. On the other hand, we can keep $\beta_2$ large and add one step to correct the competitive pairs (two successive items in the order), as explained in Remark~\ref{remark3}.

Finally, we would like to point out that subject-based outlier detection can be a straightforward extension from our proposed iLTS. From the detection results of iLTS, one may evaluate the reliability of one participant based on all the comparisons from the participant, and drop unreliable participants.

\section{CONCLUSIONS}\label{sec:conclusions}
In this paper, we have proposed a fast and adaptive algorithm iLTS for outlier detection and robust ranking in QoE evaluation. It achieves up to 190 times faster than LASSO in outlier detection. Moreover, this method can automatically estimate the number of outliers and detect them without any priori information about the number of outliers existing in the dataset. The effectiveness and efficiency of iLTS is demonstrated on both simulated examples and real-world applications. iLTS exhibits comparable accuracy to LASSO in outlier detection. There are small distinctions between them indicating that iLTS prefers to choose minority voting data as outliers, while the LASSO selects large deviations from the gradient of global ranking score as outliers even when they are in majority voting. In both cases, the global rankings obtained are stable. A future direction is to understand under what kind of conditions such an adaptive least trimmed squares algorithm works.

In summary, we expect that the proposed iLTS for QoE evaluations will be a helpful tool for people in the multimedia community exploiting crowdsourceable paired comparison data for robust ranking.

\ifCLASSOPTIONcaptionsoff
  \newpage
\fi
% trigger a \newpage just before the given reference
% number - used to balance the columns on the last page
% adjust value as needed - may need to be readjusted if
% the document is modified later
%\IEEEtriggeratref{8}
% The "triggered" command can be changed if desired:
%\IEEEtriggercmd{\enlargethispage{-5in}}

% references section

% can use a bibliography generated by BibTeX as a .bbl file
% BibTeX documentation can be easily obtained at:
% http://www.ctan.org/tex-archive/biblio/bibtex/contrib/doc/
% The IEEEtran BibTeX style support page is at:
% http://www.michaelshell.org/tex/ieeetran/bibtex/
\bibliographystyle{IEEEtran}
\bibliography{sigproc}

% Generated by IEEEtran.bst, version: 1.13 (2008/09/30)
\begin{thebibliography}{10}
\providecommand{\url}[1]{#1}
\csname url@samestyle\endcsname
\providecommand{\newblock}{\relax}
\providecommand{\bibinfo}[2]{#2}
\providecommand{\BIBentrySTDinterwordspacing}{\spaceskip=0pt\relax}
\providecommand{\BIBentryALTinterwordstretchfactor}{4}
\providecommand{\BIBentryALTinterwordspacing}{\spaceskip=\fontdimen2\font plus
\BIBentryALTinterwordstretchfactor\fontdimen3\font minus
  \fontdimen4\font\relax}
\providecommand{\BIBforeignlanguage}[2]{{%
\expandafter\ifx\csname l@#1\endcsname\relax
\typeout{** WARNING: IEEEtran.bst: No hyphenation pattern has been}%
\typeout{** loaded for the language `#1'. Using the pattern for}%
\typeout{** the default language instead.}%
\else
\language=\csname l@#1\endcsname
\fi
#2}}
\providecommand{\BIBdecl}{\relax}
\BIBdecl

\bibitem{Hossfeld12-QoE}
R.~Schatz, T.~Ho{\ss}feld, L.~Janowski, and S.~Egger, ``From packets to people:
  Quality of experience as new measurement challenge,'' in \emph{Data Traffic
  Monitoring and Analysis: From measurement, classification and anomaly
  detection to Quality of experience}.\hskip 1em plus 0.5em minus 0.4em\relax
  Springer's Computer Communications and Networks series, 2012.

\bibitem{Wu13crowd}
C.-C. Wu, K.-T. Chen, Y.-C. Chang, and C.-L. Lei, ``Crowdsourcing multimedia
  {Q}o{E} evaluation: A trusted framework,'' \emph{IEEE Transactions on
  Multimedia}, vol.~15, no.~5, pp. 1121--1137, 2013.

\bibitem{MOS}
\emph{Methods for subjective determination of transmission quality, ITU-R
  Recommendation}.\hskip 1em plus 0.5em minus 0.4em\relax Rec. ITU-T-P.800,
  1996.

\bibitem{MM09}
K.-T. Chen, C.-C. Wu, Y.-C. Chang, and C.-L. Lei, ``A crowdsourceable {Q}o{E}
  evaluation framework for multimedia content.''\hskip 1em plus 0.5em minus
  0.4em\relax ACM Multimedia, 2009, pp. 491--500.

\bibitem{Hodge}
X.~Jiang, L.-H. Lim, Y.~Yao, and Y.~Ye., ``Statistical ranking and
  combinatorial {H}odge theory,'' \emph{Mathematical Programming}, vol. 127,
  no.~6, pp. 203--244, 2011.

\bibitem{Crowdsourcing}
J.~Howe, ``The rise of crowdsourcing,'' \emph{Wired Magazine}, vol.~14, no.~6,
  pp. 176--183, 2006.

\bibitem{added}
A.~Eichhorn, P.~Ni, and R.~Eg, ``Randomised pair comparison: an economic and
  robust method for audiovisual quality assessment.''\hskip 1em plus 0.5em
  minus 0.4em\relax International Workshop on Network and Operating Systems
  Support for Digital Audio and Video, 2010, pp. 63--68.

\bibitem{MM11}
Q.~Xu, T.~Jiang, Y.~Yao, Q.~Huang, B.~Yan, and W.~Lin, ``Random partial paired
  comparison for subjective video quality assessment via {H}odge{R}ank.''\hskip
  1em plus 0.5em minus 0.4em\relax ACM Multimedia, 2011, pp. 393--402.

\bibitem{tmm12}
Q.~Xu, Q.~Huang, T.~Jiang, B.~Yan, W.~Lin, and Y.~Yao, ``Hodge{R}ank on random
  graphs for subjective video quality assessment,'' \emph{IEEE Transactions on
  Multimedia}, vol.~14, no.~3, pp. 844--857, 2012.

\bibitem{MM12}
Q.~Xu, Q.~Huang, and Y.~Yao, ``Online crowdsourcing subjective image quality
  assessment.''\hskip 1em plus 0.5em minus 0.4em\relax ACM Multimedia, 2012,
  pp. 359--368.

\bibitem{TMM13}
Q.~Xu, J.~Xiong, Q.~Huang, and Y.~Yao, ``Online {H}odge{R}ank on random graphs
  for crowdsourceable {Q}o{E} evaluation,'' \emph{IEEE Transactions on
  Multimedia}, vol.~16, no.~2, pp. 373--386, 2014.

\bibitem{MM13}
------, ``Robust evaluation for quality of experience in crowdsourcing,'' in
  \emph{ACM Multimedia}, 2013, pp. 43--52.

\bibitem{SheOwe11}
Y.~She and A.~B. Owen, ``Outlier detection using nonconvex penalized
  regression,'' \emph{Journal of the American Statistical Association}, vol.
  106, no. 494, pp. 626--639, 2011.

\bibitem{lin}
W.~Lin and C.-C.~J. Kuo, ``Perceptual visual quality metrics: A survey,''
  \emph{Journal of Visual Communication and Image Representation}, vol.~22,
  no.~4, pp. 297--312, 2011.

\bibitem{mm2}
O.~Alonso, D.~Rose, and B.~Stewart, ``Crowdsourcing for relevance evaluation,''
  \emph{SIGIR Forum}, vol.~42, no.~2, pp. 9--15, 2008.

\bibitem{mm25}
A.~Kittur, E.~Chi, and B.~Suh, ``Crowdsourcing user studies with {M}echanical
  {T}urk.''\hskip 1em plus 0.5em minus 0.4em\relax SIGCHI conference on Human
  factors in computing systems, 2008, pp. 453--456.

\bibitem{mm34}
A.~Sorokin and D.~Forsyth, ``Utility data annotation with {A}mazon {M}echanical
  {T}urk.''\hskip 1em plus 0.5em minus 0.4em\relax Computer Vision and Pattern
  Recognition Workshops, June 2008, pp. 1--8.

\bibitem{MIR10}
S.~Nowak and S.~Ruger, ``How reliable are annotations via crowdsourcing: a
  study about inter-annotator agreement for multi-label image
  annotation.''\hskip 1em plus 0.5em minus 0.4em\relax International conference
  on Multimedia information retrieval, 2010, pp. 557--566.

\bibitem{MM13workshop1}
M.~Soleymani, M.~Caro, E.~Schmidt, C.~Sha, and Y.~Yang, ``1000 songs for
  emotional analysis of music.''\hskip 1em plus 0.5em minus 0.4em\relax ACM
  international workshop on Crowdsourcing for multimedia, 2013, pp. 1--6.

\bibitem{MM13workshop2}
G.~Tavares, A.~Mourao, and J.~Magalhaes, ``Crowdsourcing for
  affective-interaction in computer games.''\hskip 1em plus 0.5em minus
  0.4em\relax ACM international workshop on Crowdsourcing for multimedia, 2013,
  pp. 7--12.

\bibitem{conf2012-441}
B.~Gardlo, M.~Ries, and T.~Ho{\ss}feld, ``Impact of screening technique on
  crowdsourcing qoe assessments.''\hskip 1em plus 0.5em minus 0.4em\relax
  International Conference Radioelektronika, Special Session on Quality in
  multimedia systems, 2012.

\bibitem{Keimel_etal_QoMEX2012_CrowdSourcing_Preprint}
C.~Keimel, J.~Habigt, and K.~Diepold, ``Challenges in crowd-based video quality
  assessment,'' in \emph{International Workshop on Quality of Multimedia
  Experience}, 2012.

\bibitem{Condorcet}
M.~de~Condorcet, ``\'{E}ssai sur l'application de l'analyse \`{a} la
  probabilit\'{e} des d\'{e}cisions rendues \`{a} la pluralit\'{e} des voix
  (essay on the application of analysis to the probability of majority
  decisions),'' \emph{Imprimerie Royale, Paris}, 1785.

\bibitem{Arrow51}
K.~J. Arrow, \emph{Social Choice and Individual Values, 2nd Ed.}\hskip 1em plus
  0.5em minus 0.4em\relax Yale University Press, New Haven, CT, 1963.

\bibitem{Thurstone27}
L.~Thurstone, ``A law of comparative judgement,'' \emph{Psychological Review},
  vol.~34, pp. 278--286, 1927.

\bibitem{Saaty77}
T.~L. Saaty, ``A scaling method for priorities in hierarchical structures,''
  \emph{Journal of Mathematical Psychology}, vol.~15, no.~3, pp. 234--281,
  1977.

\bibitem{Noether60}
G.~Noether, ``Remarks about a paired comparison model,'' \emph{Psychometrika},
  vol.~25, pp. 357--367, 1960.

\bibitem{David88}
H.~David, \emph{The method of paired comparisons}, ser. 2nd Ed., Griffin's
  Statistical Monographs and Courses, 41.\hskip 1em plus 0.5em minus
  0.4em\relax Oxford University Press, New York, NY, 1988.

\bibitem{Yu09}
S.~X. Yu, ``Angular embedding: from jarring intensity differences to perceived
  luminance,'' in \emph{IEEE Conference on Computer Vision and Pattern
  Recognition}, 2009, pp. 2302--2309.

\bibitem{Yu12}
------, ``Angular embedding: A robust quadratic criterion,'' \emph{IEEE
  Transactions on Pattern Analysis and Machine Intelligence}, vol.~34, no.~1,
  pp. 158--173, 2012.

\bibitem{Osher11_retinex}
W.~Ma, J.~M. Morel, S.~Osher, and A.~Chien, ``An ${L}_1$-based variational
  model for retinex theory and its application to medical images,'' in
  \emph{Computer Vision Pattern Recognition}, 2011, pp. 153--160.

\bibitem{Pagerank}
S.~Brin and L.~Page, ``The anatomy of a large-scale hypertextual web search
  engine,'' in \emph{International Conference on World Wide Web}, 1998, pp.
  107--117.

\bibitem{Hits}
J.~Kleinberg, ``Authoritative sources in a hyperlinked environment,''
  \emph{Journal of the ACM}, vol.~46, no.~5, pp. 604--632, 1999.

\bibitem{CorMohRas07}
C.~Cortes, M.~Mohri, and A.~Rastogi, ``Magnitude-preserving ranking
  algorithms,'' vol.~24, 2007, pp. 169--176.

\bibitem{ICML14}
A.~Rajkumar and S.~Agarwal, ``A statistical convergence perspective of
  algorithms for rank aggregation from pairwise data,'' in \emph{International
  Conference on Machine Learning}, 2014, pp. 118--126.

\bibitem{Stefani77}
R.~T. Stefani, ``Football and basketball predictions using least squares,''
  \emph{IEEE Transactions on Systems, Man, and Cybernetics}, vol.~7, pp.
  117--121, 1977.

\bibitem{Elo++}
Y.~Sismanis, ``How {I} won the ``chess ratings -- {E}lo vs. the {R}est of the
  world'' competition,'' \emph{http://arxiv.org/abs/1012.4571v1}, 2010.

\bibitem{osting2013statistical}
B.~Osting, J.~Darbon, and S.~Osher, ``Statistical ranking using the $l_1$-norm
  on graphs.'' \emph{AIMS Journal on Inverse Problems and Imaging}, vol.~7,
  no.~3, pp. 907--926, 2013.

\bibitem{ailon2012active}
N.~Ailon, ``An active learning algorithm for ranking from pairwise preferences
  with an almost optimal query complexity,'' \emph{Journal of Machine Learning
  Research}, vol.~13, pp. 137--164, 2012.

\bibitem{jamieson2011active}
K.~G. Jamieson and R.~D. Nowak, ``Active ranking using pairwise comparisons,''
  \emph{Annual Conference on Neural Information Processing Systems}, pp.
  2240--2248, 2011.

\bibitem{cao2006}
Y.~Cao, J.~Xu, T.-Y. Liu, H.~Li, Y.~Huang, and H.-W. Hon, ``Adapting ranking
  {S}{V}{M} to document retrieval,'' in \emph{ACM Special Interest Group on
  Information Retrieval}, 2006, pp. 186--193.

\bibitem{Burges2005}
C.~Burges, T.~Shaked, E.~Renshaw, A.~Lazier, M.~Deeds, N.~Hamilton, and
  G.~Hullender, ``Learning to rank using gradient descent,'' in
  \emph{International Conference on Machine Learning}, 2005, pp. 89--96.

\bibitem{Burges2006}
C.~J.~C. Burges, R.~Ragno, and Q.~V. Le, ``Learning to rank with nonsmooth cost
  functions,'' in \emph{Annual Conference on Neural Information Processing
  Systems}, 2006, pp. 193--200.

\bibitem{chen2013pairwise}
X.~Chen, P.~N. Bennett, K.~Collins-Thompson, and E.~Horvitz, ``Pairwise ranking
  aggregation in a crowdsourced setting,'' in \emph{International conference on
  Web search and data mining}, 2013, pp. 193--202.

\bibitem{Yi2013}
J.~Yi, R.~Jin, S.~Jain, and A.~K. Jain, ``Inferring users' preferences from
  crowdsourced pairwise comparisons: A matrix completion approach,'' in
  \emph{AAAI Conference on Human Computation and Crowdsourcing}, 2013, pp.
  207--215.

\bibitem{Negahban2012}
S.~Negahban, S.~Oh, and D.~Shah, ``Iterative ranking from pair-wise
  comparisons,'' in \emph{Annual Conference on Neural Information Processing
  Systems}, 2012, pp. 2483--2491.

\bibitem{chen2010quadrant}
K.-T. Chen, C.-J. Chang, C.-C. Wu, Y.-C. Chang, and C.-L. Lei, ``Quadrant of
  euphoria: a crowdsourcing platform for {Q}o{E} assessment,'' \emph{Network,
  IEEE}, vol.~24, no.~2, pp. 28--35, 2010.

\bibitem{keimel2012challenges}
C.~Keimel, J.~Habigt, and K.~Diepold, ``Challenges in crowd-based video quality
  assessment,'' in \emph{International Workshop on Quality of Multimedia
  Experience}, 2012, pp. 13--18.

\bibitem{Hirani11}
A.~N. Hirani, K.~Kalyanaraman, and S.~Watts, ``Least squares ranking on
  graphs,'' \emph{\url{arXiv:1011.1716v4}}, 2011.

\bibitem{hodge_l1}
B.~Osting, J.~Darbon, and S.~Osher, ``Statistical ranking using the $l_1$-norm
  on graphs,'' \emph{Inverse Problems \& Imaging}, vol.~7, no.~3, 2013.

\bibitem{osting2013enhanced}
B.~Osting, C.~Brune, and S.~Osher, ``Enhanced statistical rankings via targeted
  data collection,'' in \emph{International Conference on Machine Learning},
  2013, pp. 489--497.

\bibitem{Parrilo11_gameflow}
O.~Candogan, I.~Menache, A.~Ozdaglar, and P.~A. Parrilo, ``Flows and
  decompositions of games: Harmonic and potential games,'' \emph{Mathematics of
  Operations Research}, vol.~36, no.~3, pp. 474--503, 2011.

\bibitem{Chorin93}
A.~J. Chorin and J.~E. Marsden, \emph{A Mathematical Introduction to Fluid
  Mechanics}, ser. Texts in Applied Mathematics.\hskip 1em plus 0.5em minus
  0.4em\relax Springer, 1993.

\bibitem{Yuan09_hodge}
J.~Yuan, G.~Steidl, and C.~Schnorr, ``Convex {H}odge decomposition and
  regularization of image flows,'' \emph{Journal of Mathematical Imaging and
  Vision}, vol.~33, no.~2, pp. 169--177, 2009.

\bibitem{ErdRen59}
P.~Erdos and A.~Renyi, ``On random graphs i,'' \emph{Publicationes
  Mathematicae-Debrecen}, vol.~6, pp. 290--297, 1959.

\bibitem{k-regular-paper}
N.~Wormald, ``Models of random regular graphs.''\hskip 1em plus 0.5em minus
  0.4em\relax In Surveys in Combinatorics, 1999, pp. 239--298.

\bibitem{preattachment}
A.-L. Barabasi and R.~Albert, ``Emergence of scaling in random networks,''
  \emph{Science}, vol. 286, no. 5439, pp. 509--512, 1999.

\bibitem{smallworld}
D.~Watts and S.~Strogatz, ``Collective dynamics of `small-world' networks,''
  \emph{Nature}, no. 393, pp. 440--442, 1998.

\bibitem{geo}
M.~Penrose, \emph{Random Geometric Graphs (Oxford Studies in
  Probability)}.\hskip 1em plus 0.5em minus 0.4em\relax Oxford University
  Press, 2003.

\bibitem{hawkins1980identification}
D.~Hawkins, \emph{Identification of outliers}.\hskip 1em plus 0.5em minus
  0.4em\relax Springer, 1980, vol.~11.

\bibitem{papadimitriou2003loci}
S.~Papadimitriou, H.~Kitagawa, P.~Gibbons, and C.~Faloutsos, ``Loci: Fast
  outlier detection using the local correlation integral,'' in \emph{IEEE
  International Conference on Data Engineering}, 2003, pp. 315--326.

\bibitem{Hossfeld2014}
T.~Hossfeld, C.~Keimel, M.~Hirth, B.~Gardlo, J.~Habigt, K.~Diepold, and
  P.~Tran-Gia, ``Best practices for {Q}o{E} crowdtesting: {Q}o{E} assessment
  with crowdsourcing,'' \emph{IEEE Transactions on Multimedia}, vol.~16, no.~2,
  pp. 541--558, 2014.

\bibitem{Huber81}
P.~J. Huber, \emph{Robust Statistics}.\hskip 1em plus 0.5em minus 0.4em\relax
  New York: Wiley, 1981.

\bibitem{Yan13}
M.~Yan, ``Restoration of images corrupted by impulse noise and mixed gaussian
  impulse noise using blind inpainting,'' \emph{SIAM Journal on Imaging
  Sciences}, vol.~6, no.~3, pp. 1227--1245, 2013.

\bibitem{YanYO12}
M.~Yan, Y.~Yang, and S.~Osher, ``Robust 1-bit compressive sensing using
  adaptive outlier pursuit,'' \emph{IEEE Transactions on Signal Processing},
  vol.~60, no.~7, pp. 3868--3875, 2012.

\bibitem{ZengF14}
X.~Zeng and M.~A.~T. Figueiredo, ``Robust binary fused compressive sensing
  using adaptive outlier pursuit,'' \emph{CoRR}, vol. abs/1402.5076, 2014.

\bibitem{YanYO13}
M.~Yan, Y.~Yang, and S.~Osher, ``Exact low-rank matrix completion from sparsely
  corrupted entries via adaptive outlier pursuit,'' \emph{Journal of Scientific
  Computing}, vol.~56, no.~3, pp. 433--449, 2013.

\bibitem{LTS}
P.~J. Rousseeuw and A.~M. Leroy, \emph{Robust Regression and Outlier
  Detection}.\hskip 1em plus 0.5em minus 0.4em\relax Wiley, 1987.

\bibitem{LIVE}
``{L}{I}{V}{E} image \& video quality assessment database.''\hskip 1em plus
  0.5em minus 0.4em\relax {\tt http://live.ece.utexas.edu/research/quality/},
  2008.

\bibitem{IVC}
``Subjective quality assessment irccyn/ivc database.''\hskip 1em plus 0.5em
  minus 0.4em\relax {\tt http://www2.irccyn.ec-nantes.fr/ivcdb/}, 2005.

\end{thebibliography}
\end{document}